\newtheorem{thm}{Theorem}[section] 
\newtheorem{remark}[thm]{Remark}
\title{Synchronization of heterogeneous oscillators under network modifications: Perturbation and optimization of the synchrony alignment function
\thanks{
This work was funded in part by NSF Grant No.~DMS-1127914 (DT), NIH Grant No.~R01HD075712 (DT), the James S. McDonnell Foundation No.~220020325 (PSS),  {ARO Grants No.~W911NF-12-1-0276 (JS) and No.~W911NF-16-1-0081 (JS),} and Simons Foundation Grant No.~318812 (JS). Any opinions, findings, and conclusions or recommendations expressed in this material are those of the author(s) and do not necessarily reflect the views of the funding agencies. 
}}
\author{
Dane Taylor\thanks{Carolina Center for Interdisciplinary Applied Mathematics, Department of Mathematics, University of North Carolina, Chapel Hill, NC 27599, USA;
and Statistical and Applied Mathematical Sciences Institute (SAMSI), Research Triangle Park, NC, 27709, USA}
\and Per Sebastian Skardal\thanks{Department of Mathematics, Trinity College, Hartford, CT, 06106, USA}
\and Jie Sun\thanks{Department of Mathematics, Clarkson University, Potsdam, NY, 13699, USA; Department of Physics, Potsdam, NY, 13699, USA;  {Department of Computer Science, Potsdam, NY, 13699, USA.}}
}
\begin{document}

\maketitle

\begin{abstract}
Synchronization is central to many complex systems in engineering physics (e.g., the power-grid, Josephson junction circuits, and electro-chemical oscillators) and biology (e.g., neuronal, circadian, and cardiac rhythms). Despite these widespread applications---for which proper functionality depends sensitively on the extent of synchronization---there remains a lack of understanding for how systems  {can best} evolve and adapt to enhance or inhibit synchronization. We study how network modifications affect the synchronization properties of network-coupled dynamical systems that have heterogeneous node dynamics (e.g., phase oscillators with non-identical frequencies), which is often the case for real-world systems. Our approach relies on a \emph{synchrony alignment function} (SAF) that quantifies the interplay between heterogeneity of the network and of the oscillators and provides an objective  measure for a system's ability to synchronize. We conduct a spectral perturbation analysis of the SAF for structural network modifications including the addition and removal of edges, which subsequently ranks the edges according to their importance to synchronization. Based on this analysis, we develop gradient-descent algorithms to efficiently solve optimization problems that aim to maximize phase synchronization via network modifications. We support these and other results with numerical experiments.
\end{abstract}

\begin{keywords}
{synchronization, network-coupled oscillators, Kuramoto model, complex networks, synchrony alignment function, optimization}
\end{keywords}

\begin{AMS}
34D06, 37N40, 05C82, 70K05, 92B25, 93C73
\end{AMS}

\pagestyle{myheadings}
\thispagestyle{plain}
\markboth{D. TAYLOR {\it et al.}}{Phase synchronization under network modifications}

\section{Introduction}
The study of synchronization is a multidisciplinary pursuit \cite{glass1988clocks,pikovsky2003synchronization,arenas2008synchronization} aimed to understand how dynamics occurring for individual oscillators (which can represent a wide array of phenomena ranging from populations of firing neurons to generators in a power grid \cite{dorfler2012synchronization,nishikawa2015comparative,rohden2012self,skardal2015control}) can combine so that the system exhibits self-organized, collective behavior. For numerous systems, proper functionality requires an appropriate amount of synchronization. The power grid, for example, must provide electricity following regional specifications (e.g., alternating current at 120 volts and 60 hertz in the United States) and a breakdown of synchronization can lead to costly blackouts \cite{parrilo1999model,simonsen2008transient,susuki2009global,motter2013spontaneous}. Other technologies in which synchronization plays a crucial role include Josephson junctions circuits \cite{wiesenfeld1996synchronization,rosenblum2007self}, physical infrastructure \cite{strogatz2005theoretical}, electro-chemical oscillators \cite{kazanci2007pattern}, synthetic biological oscillators \cite{prindle2012sensing}, and distributed sensor networks \cite{olfati2007consensus,sarlette2009consensus,olshevsky2011convergence,olshevsky2009convergence}. Synchronization is also ubiquitous in biological systems \cite{winfree1967biological}, where applications include coordinated neuronal activity in the brain \cite{medvedev2001synchronization,schnitzler2005normal}, cardiac rhythms of the heart \cite{mirollo1990synchronization,karma2013physics}, circadian rhythms governing sleep cycles \cite{saper2005hypothalamic}, gene regulation \cite{kuznetsov2004synchrony}, and intestinal activity \cite{ermentrout1984frequency,aliev2000simple}. Excess synchronization in the brain, for example, has been linked to tremors and seizures \cite{schnitzler2005normal,wilson2015clustered}.

\begin{figure}[t]
\centering
\includegraphics[width=\linewidth]{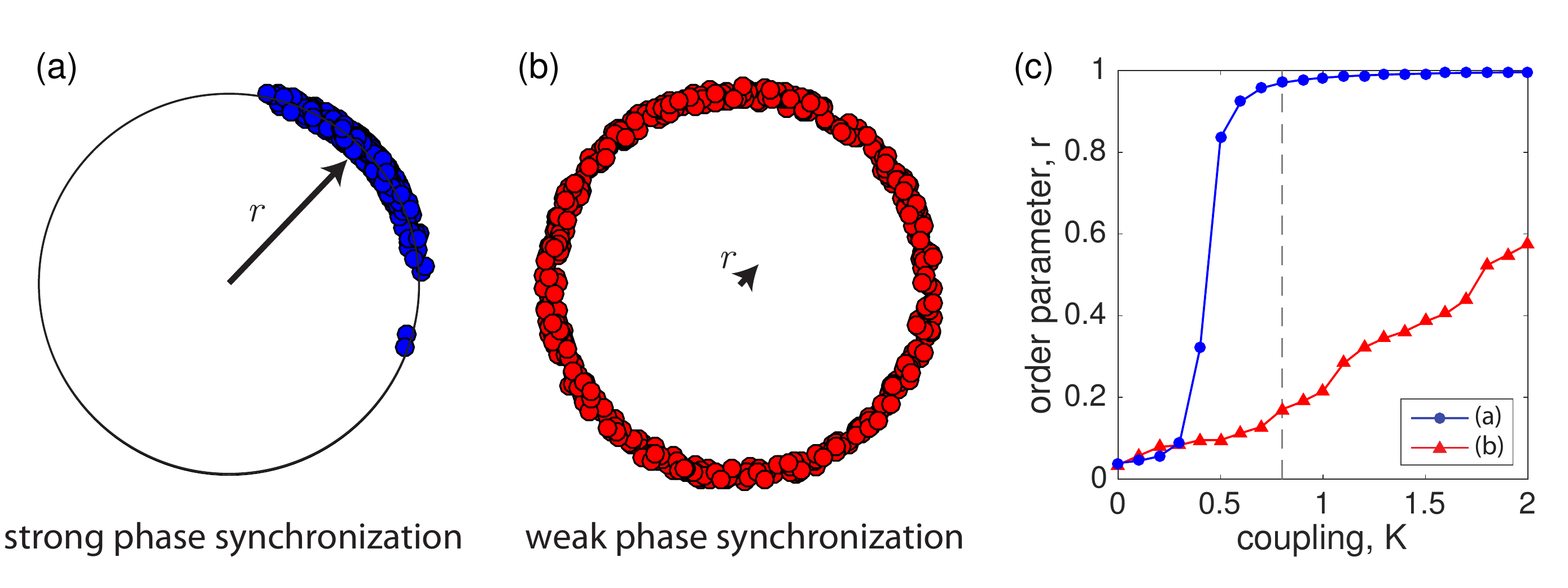}
\caption{ 
Phase synchronization depends  {crucially} on the alignment of heterogeneous oscillator dynamics (i.e., as indicated by their natural frequencies $\{\omega_n\}$) with the heterogeneity of the network structure  {(which is manifest in the eigenvalues and eigenvectors of the network Laplacian matrix $L$).}
(a),(b)~Phase-locked oscillators $\{\theta_n\}$ (shown here embedded on the unit circle) for states of strong ($r\approx1$) and weak ($r\approx0$) phase synchronization, respectively.  {Here, $r$ is the Kuramoto order parameter given by Eq.~\eqref{eq:OrderParameter}.} These simulations reflect phase synchronization of the Kuramoto model [Eq.~\eqref{eq:Kuramoto1} and $H(\theta)=\sin(\theta)$] with coupling strength $K=0.8$ and network coupling given by the Erd\H{o}s-R\'enyi (ER) model \cite{erdos1960evolution} with $N=500$ nodes, mean degree 4, and minimum degree of $d_{min}=2$. The only difference between the systems studied in panels (a) and (b) is how the natural frequencies align with the network structure;  {panels (a) and (b) correspond to maximizing and minimizing phase synchronization, respectively (in the notation introduced in Sec.~\ref{sec:Derive}, these correspond to $\omega_n = v_n^{(N)}$ and $\omega_n = v_n^{(2)}$, respectively, where  $\bm{v}^{(m)}$ is the eigenvector corresponding to the $m$-th smallest eigenvalue of the network Laplacian).}
(c)~Dependence of $r$ on $K$ for these two systems. The vertical dashed line indicates the value of  $K$ used to produce panels (a) and (b). See Sec.~\ref{sec:saf_example} for further discussion  {of the simulation.}
}
\label{fig:Kuramoto_toy}
\end{figure}

Given these widespread applications, it is important to develop theory to control, engineer and optimize the synchronization properties of complex systems---particularly, heterogeneous systems. In this research, we explore what we believe to be one of the most fundamental pursuits in this direction, understanding the effect of a network modification such as the addition or removal of an edge or set of edges  {on phase synchronization}. This fundamental topic has been previously studied for complete (perfect) synchronization of identical oscillators \cite{barahona2002synchronization,milanese2010approximating,dadashi2010rewiring,jalili2013enhancing} (i.e., based on the Master Stability  {Function} \cite{pecora1998master}) and nonidentical oscillators in the weak synchronization regime \cite{restrepo2006characterizing,milanese2010approximating,taylor2011network,taylor2012social} (i.e., the onset of synchronization \cite{restrepo2005onset,restrepo2006emergence}). We develop theory for phase synchronization of nonidentical oscillators in the strong synchronization regime, thereby filling an important gap in the established literature.

Our approach relies on a \emph{synchrony alignment function} (SAF) \cite{skardal2014optimal} that  {quantifies} the interplay between heterogeneity in the network and heterogeneity of the oscillators and provides insight into a network's ability to synchronization. We showed in \cite{skardal2014optimal} that minimization of the SAF gives a maximization of phase synchronization, and we developed greedy, Monte-Carlo algorithms to optimize the phase synchronization of networks under various constraints. See Fig.~\ref{fig:Kuramoto_toy} for a numerical experiment highlighting the effectiveness of this approach. Because this approach is based on a mathematical analysis, it is much more reliable than---yet in agreement with---known heuristics for enhancing synchronization such as implementing  {negative} correlations between the frequencies of neighboring oscillators \cite{brede2008synchrony,buzna2009synchronization,skardal2014optimal} or incorporating  {positive} correlations between the oscillators' degrees and  {natural frequency magnitudes} \cite{brede2008synchrony,skardal2014optimal}.  In addition to optimization, the SAF can be used to explore fundamental limitations on phase synchronization for systems with frustrated coupling---a phenomenon referred to as the \emph{erosion of synchronization} \cite{skardal2015erosion_B,skardal2015erosion}.  {In continuing to develop this theoretical framework, we} recently generalized the SAF to directed networks \cite{skardal2016optimal}.

Here, we conduct a spectral perturbation analysis of the SAF to analyze the effect on phase synchronization due to structural network modifications. 
This analysis ranks the edges (and potential edges) according to their importance to synchronization. Importantly, this ranking (i.e., centrality measure \cite{taylor2015eigenvector}) takes into account the full system---that is, both the particular network structure and the oscillators' (potentially) heterogeneous natural frequencies and is akin to other rankings that are specific to a particular class of dynamics \cite{restrepo2006characterizing,gleich2015pagerank,skardal2015collective}. Moreover, we study a class of optimization problem in which  the goal is to maximally enhance phase synchronization through the addition and removal of a fixed numbers of edges. Using these rankings, we develop efficient gradient-descent algorithms to yield approximate solutions. We support these and other findings with numerical experiments. 

The remainder of this paper is organized as follows. In Sec.~\ref{sec:model}, we introduce the oscillator models that we study and order parameters to quantify phase synchronization. In Sec.~\ref{sec:SAF}, we present the SAF, derive its upper and lower bounds,  {and describe two pedagogical network examples.} In Sec.~\ref{sec:Perturb}, we present a spectral perturbation analysis of the SAF for a system undergoing a network modification. In Sec.~\ref{sec:OptimalSection}, we  {present the ranking of} edges according to their importance to phase synchronization.  {In Sec.~\ref{sec:algorithm_sec}, we} develop gradient-descent algorithms to efficiently enhance synchronization. We provide a discussion in Sec.~\ref{sec:Discussion}.

\section{Oscillator Models for Phase Synchronization}\label{sec:model}
We define in Sec.~\ref{sec:models} two related models that exhibit phase synchronization, the nonlinear \emph{Kuramoto phase-reduction model} \cite{kuramoto2012chemical} and the linear \emph{heterogeneous Laplacian dynamics (HLD)}. As we showed in \cite{skardal2014optimal}, the linear HLD approximates the synchronization of nonlinear systems in the regime of strong phase synchronization. To quantify the extent of phase synchronization of both systems, in Sec.~\ref{sec:phase_sync} we define two order parameters, the \emph{Kuramoto order parameter} $r$ and \emph{variance order parameter} $R$, and show that they are approximately equal in the strong synchronization regime.

\subsection{Oscillator Models}\label{sec:models}
We first define Kuramoto's model for weakly coupled limit-cycle oscillators.
\begin{definition}[Kuramoto Phase-Reduction Model \cite{kuramoto2012chemical}]
Consider $N$ phase oscillators in which $\theta_n\in[0,2\pi)$ is the phase of oscillator $n$, $\hat{\omega}_n\in\mathbb{R}$ is the natural frequency of oscillator $n$, matrix $\hat{A}_{nm}$ encodes the network-coupling of oscillators, and $H_{nm} :(-\pi,\pi)\to\mathbb{R}$ is an interaction-specific,  {$2\pi$-periodic coupling function that is differentiable at $0$.} The Kuramoto phase-reduction model \cite{izhikevich2008phase} is given by the system of first-order nonlinear differential equations 
\begin{align}
\frac{d{\theta}_n}{dt}= \hat{\omega}_n + {K}\sum_{m=1}^N \hat{A}_{nm}H_{nm}(\theta_m-\theta_n),~~~n\in\{1,\dots,N\}.
\label{eq:Kuramoto1}
\end{align}
\end{definition}

Kuramoto derived Eq.~\eqref{eq:Kuramoto1} as a phase-reduction model \cite{izhikevich2008phase} to describe the synchronization of weakly interacting limit cycle oscillators (i.e., the coupling is sufficiently weak so that the limit cycles are not destroyed). Often, it is assumed that the oscillator interactions follow an identical functional form, $H_{nm}(\theta)=H(\theta)$. Under the choice $H(\theta)=\sin(\theta)$, which represents the first-order term of a Fourier expansion for an odd function $H(\theta)$, Eq.~\eqref{eq:Kuramoto1} is widely referred to simply as the ``Kuramoto model,'' and it is one of the most paradigmatic nonlinear systems for the study of synchronization. 
It has been used to study, for example, the power grid \cite{dorfler2012synchronization,nishikawa2015comparative,skardal2015control}, animal movements \cite{nabet2009dynamics}, clapping audiences \cite{taylor2010spontaneous} and many more applications \cite{acebron2005kuramoto,arenas2008synchronization,pikovsky2003synchronization}.

We also study synchronization according to the  following linear system.

\begin{definition}[Heterogeneous Laplacian Dynamics]
Consider $N$ oscillators with phases $\{\theta_n\}$ and natural frequencies $\{\omega_n\}$ that are coupled by a network given with adjacency matrix $A$, where $A_{nm}$ encodes the impact of oscillator $m$ on oscillator $n$. Letting $L_{nm} = -A_{nm}+ \delta_{nm}\sum_{m}A_{nm}$ define the combinatorial Laplacian matrix corresponding to $A$, the system is given for $n\in\{1,\dots,N\}$ by  
\begin{align}
\frac{d{\theta}_n}{dt} = {\omega}_n-K\sum_{m=1}^N {L}_{nm}\theta_m , \label{eq:HLD}
\end{align}
which can be written in matrix form by $d{\bm{\theta}}/dt = \bm{\omega}-K{L}\bm{\theta}$.
\end{definition}

In previous research \cite{skardal2014optimal,skardal2016optimal}, we showed in the regime of strong phase synchronization that the dynamics of Eq.~\eqref{eq:Kuramoto1} can be approximated by Eq.~\eqref{eq:HLD}. In particular, if one defines {$\omega_n=\hat{\omega}_n+K\sum_{m}\hat{A}_{nm}H_{nm}(0)$ and $A_{nm}=\hat{A}_{nm}H'_{nm}(0)$,}
then Eq.~\eqref{eq:HLD} gives the linearization of Eq.~\eqref{eq:Kuramoto1} around the synchronization manifold \cite{skardal2014optimal,skardal2016optimal}. For example, phase-locked solutions of Eq.~\eqref{eq:HLD} approximate phase-locked solutions of Eq.~\eqref{eq:Kuramoto1}. In addition to providing insight into the synchronization of nonlinear systems, we note that Eq.~\eqref{eq:HLD} has many applications itself including consensus algorithms for sensor networks \cite{olfati2007consensus,sarlette2009consensus,olshevsky2011convergence}, where it is often assumed that $\omega_n=\omega$ for each $n$.

\subsection{Quantifying Phase Synchronization}\label{sec:phase_sync}
Many notions of synchronization have been studied, {each capturing different physical characteristics of real-world systems}. For identical oscillators (i.e., those in which $\hat{\omega}_n=\hat{\omega}$ or $ {\omega}_n= {\omega}$ for every $n$), one often studies whether the oscillators obtain \emph{perfect} phase synchronization, whereby all phases converge so that $\lim_{t\to\infty}|\theta_n(t)-\theta_m(t)|=0$. For systems with heterogeneous dynamics, such as when $\{\omega_n\}$ or $\{\hat{\omega}_n\}$ are non-identical (which is typical in real-world scenarios), this notion of synchronization is too restrictive  \cite{sun2009master}. Here, we study states in which the phase oscillators are \emph{phase-locked} and the oscillators achieve strong phase synchronization. That is, for any oscillators $n$ and $m$ the phase difference $\theta_n(t)-\theta_m(t)$ is assumed  {to relax to a small, constant value} $|\theta_n(t)-\theta_m(t)|\ll1$. We note that phase locking implies perfect frequency synchronization so that $ {d\theta_n}/{dt}= {d\theta_m}/{dt}=\Omega$ for any pair of nodes $n$ and $m$, where $\Omega=N^{-1}\sum_n\omega_n$ \cite{skardal2015collective} is the collective frequency for undirected networks. 

Because phase-locked oscillators need not converge---instead, they cluster around some central phase, or a \emph{mean field}---it is important to measure (quantify) the extent of phase synchronization. To this end, we study two measures of phase synchronization, the Kuramoto order parameter, $r$, and the variance order parameter, $R$,  {to be defined below.}  We note that $r$ is the most common for Eq.~\eqref{eq:Kuramoto1}; however, for analytical purposes, it is advantageous to measure phase synchronization based on $R$. In principle, either order parameter ($r$ or $R$) can be applied to either system [Eq.~\eqref{eq:Kuramoto1} or Eq.~\eqref{eq:HLD}], and as we shall show, the order parameters are approximately equal in the strong synchronization regime.

\begin{definition}[Kuramoto Order Parameter \cite{kuramoto2012chemical}]
Given a system of coupled oscillators with phases $\{\theta_n\}$ [e.g., Eq.~\eqref{eq:Kuramoto1} or Eq.~\eqref{eq:HLD}], the Kuramoto order parameter $r$ and mean field $\psi$ are found by mapping the phases onto the unit circle and calculating the centroid,
\begin{align}
r e^{i\psi}= \frac{1}{N}\sum_{n=1}^N e^{i\theta_n} \label{eq:OrderParameter},
\end{align}
where $r\geq0$ and $\psi\in[0,2\pi)$.
\end{definition}
\begin{remark}
By definition, the value $r\in[0,1]$.
Importantly, $r\approx1$ indicates strong phase synchronization, whereas $r\approx0$  {typically} indicates weak (or a lack of) phase synchronization. See Fig.~\ref{fig:Kuramoto_toy}(a) and (b) for illustrations of these two cases.
\end{remark}

\begin{definition}[Variance Order Parameter]
Given a system of coupled oscillators with phases $\{\theta_n\}$ [e.g., Eq.~\eqref{eq:Kuramoto1} or Eq.~\eqref{eq:HLD}], 
we define 
\begin{align}
R = 1 -  {\sigma_\theta^2}/{2} \label{eq:varianceOrderParameter}.
\end{align}
where $\sigma_\theta^2 = {N}^{-1} \sum_{n} ({\theta_n}-\overline{\theta} )^2 =N^{-1}||\bm{\theta}-\overline{\theta}\bm{1}||_2^2$ is the variance of phases and the mean phase $\overline{\theta}=N^{-1}\sum_n\theta_n$ defines a mean field.
\end{definition}

Order parameters $r$ and $R$ both limit to unity for perfect synchronization, and ``strong synchronization'' is defined as the regime in which $r\approx R\approx 1$. We now establish that these order parameters are approximately equal in this regime through the following bounds.

\begin{proposition}[Equivalence of Order Parameters]\label{lemma1}
Assume that the infinite sequence $\{\|\bm{\theta}-\psi \bm 1\|^k_k/k!\}$ for $k\in\{2,4,\dots\}$ monotonically converges to zero so that 
\begin{align}
\lim_{k\to\infty}\frac{\|\bm{\theta}-\psi \bm 1\|_k^k}{k!}\to0 ,
\end{align}
where $||\cdot ||_p$ denotes the $p$-norm, and
\begin{align}
\frac{\|\bm{\theta}-\psi\|_2^2}{2!}& {\ge\frac{\|\bm{\theta}-\psi\|_3^3}{3!}\ge\cdots\ge} \frac{\|\bm{\theta}-\psi\|_k^k}{k!}>\cdots, 
\end{align}
then Eqs.~\eqref{eq:varianceOrderParameter} and \eqref{eq:OrderParameter} satisfy the following bounds,
\begin{align}
R - \frac{|\overline{\theta}-\psi|^2}{2} \le  r \le  R + \frac{||\bm \theta - \psi \bm 1||_4^4}{24N} \label{eq:equivalence}.
\end{align}
Moreover, the difference between the two mean fields, $\psi$ and $\overline{\theta}$, is bounded by
\begin{align}
|\overline{\theta}-\psi|  &\le \frac{||\bm \theta-\psi \bm1||_3^{3} }{6N}.\label{eq:meanfields}
\end{align}
\begin{proof}
See Appendix~\ref{appendixA}.
\end{proof}
\end{proposition}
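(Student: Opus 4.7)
The proof hinges on a single reduction: rewriting the centroid definition \eqref{eq:OrderParameter} in a real-valued form. Multiplying by $e^{-i\psi}$ and using that $r$ is real and nonnegative yields the two scalar identities
\begin{align*}
r=\frac{1}{N}\sum_{n=1}^N\cos(\theta_n-\psi),\qquad 0=\frac{1}{N}\sum_{n=1}^N\sin(\theta_n-\psi).
\end{align*}
The first drives the bounds relating $r$ and $R$, while the second—which is precisely the content of ``$\psi$ is the argument of the centroid''—drives the bound on $|\overline{\theta}-\psi|$. Everything else is algebraic book-keeping.

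For the $r$ bounds, expand cosine term-by-term; using that $(\theta_n-\psi)^k=|\theta_n-\psi|^k$ for even $k$, one obtains
\begin{align*}
r=1-\frac{\|\bm{\theta}-\psi\bm{1}\|_2^2}{2N}+\frac{\|\bm{\theta}-\psi\bm{1}\|_4^4}{24N}-\frac{\|\bm{\theta}-\psi\bm{1}\|_6^6}{720N}+\cdots.
\end{align*}
By the monotonicity hypothesis this is an alternating series to which Leibniz's criterion applies, so successive partial sums bracket $r$:
\begin{align*}
1-\frac{\|\bm{\theta}-\psi\bm{1}\|_2^2}{2N}\le r\le 1-\frac{\|\bm{\theta}-\psi\bm{1}\|_2^2}{2N}+\frac{\|\bm{\theta}-\psi\bm{1}\|_4^4}{24N}.
\end{align*}
To convert to $R$, I would use the elementary identity $\|\bm{\theta}-\psi\bm{1}\|_2^2=\|\bm{\theta}-\overline{\theta}\bm{1}\|_2^2+N(\overline{\theta}-\psi)^2$, whose cross term vanishes because $\overline{\theta}$ is the arithmetic mean of $\{\theta_n\}$. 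Substitution rewrites the quantity $1-\|\bm{\theta}-\psi\bm{1}\|_2^2/(2N)$ as $R-(\overline{\theta}-\psi)^2/2$; this immediately gives the left inequality in \eqref{eq:equivalence}, and dropping the nonpositive term $-(\overline{\theta}-\psi)^2/2$ in the right partial sum gives the right inequality.

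For \eqref{eq:meanfields}, I would apply Taylor's theorem with Lagrange remainder to sine at first order, giving $\sin(x)=x+R_3(x)$ with $|R_3(x)|\le |x|^3/6$. Substituting into the vanishing sine identity above and applying the triangle inequality yields
\begin{align*}
N|\overline{\theta}-\psi|=\Bigl|\sum_n R_3(\theta_n-\psi)\Bigr|\le\frac{1}{6}\sum_n|\theta_n-\psi|^3=\frac{\|\bm{\theta}-\psi\bm{1}\|_3^3}{6},
\end{align*}
which is \eqref{eq:meanfields}. The subtle point worth emphasizing is that one cannot run the Leibniz argument directly on the sine series the way one did on cosine, because the odd moments $\sum_n(\theta_n-\psi)^{2j+1}$ are sign-indeterminate and so the partial sums do not bracket the limit; Taylor's remainder sidesteps this cleanly. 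This asymmetry between the even (cosine) and odd (sine) expansions is the main technical step I would expect to require some care—everything else follows from direct substitution.
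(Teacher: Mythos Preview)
Your argument for the bounds on $r$ is essentially the paper's: the same real/imaginary split of \eqref{eq:OrderParameter}, the same cosine expansion with the Leibniz alternating-series bracket, and the same variance identity $\|\bm{\theta}-\psi\bm{1}\|_2^2=\|\bm{\theta}-\overline{\theta}\bm{1}\|_2^2+N(\overline{\theta}-\psi)^2$ to pass from the $\psi$-centered second moment to $R$.

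Where you genuinely diverge is in the proof of \eqref{eq:meanfields}. The paper expands the sine identity as the full series
\[
\overline{\theta}-\psi=\sum_{k=1}^{\infty}\frac{(-1)^{k+1}}{(2k+1)!\,N}\sum_{n}(\theta_n-\psi)^{2k+1}
\]
and then invokes the monotone-convergence hypothesis to argue that the magnitude of this alternating sum is bounded by its first term. You instead apply Taylor's theorem with Lagrange remainder termwise, $\sin x=x+R_3(x)$ with $|R_3(x)|\le|x|^3/6$, and then the triangle inequality. Your route is arguably cleaner: as you correctly flag, the odd moments $\sum_n(\theta_n-\psi)^{2k+1}$ are sign-indeterminate, so the series above is not a bona fide alternating series in the Leibniz sense, and the paper's appeal to monotone convergence (which is stated for the $p$-norm quantities $\|\bm{\theta}-\psi\bm{1}\|_k^k/k!$, i.e., sums of \emph{absolute} values) does not directly control the signed sums appearing in the sine expansion. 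Your Lagrange-remainder argument sidesteps this entirely and does not even require the monotonicity hypothesis for the odd indices. The trade-off is that the paper's approach keeps the two halves of the proof structurally parallel, whereas yours treats cosine and sine asymmetrically---but that asymmetry is, as you note, forced by the mathematics.
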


As we show in Appendix \ref{appendixA}, the variance order parameter $R$ captures the leading order term of an expansion of $r$ near $r=1$, and the upper and lower bounds in Eq.~\eqref{eq:OrderParameter} come from the next terms in the expansion.   Both $\frac{|\overline{\theta}-\psi|^2}{2}$ and $\frac{||\bm \theta - \psi \bm 1||_4^4}{24N}$ become vanishingly small in the strong synchronization regime, implying that $r\approx R$ is a valid and accurate approximation in this regime.

\section{The Synchrony Alignment Function (SAF)}\label{sec:SAF}
We now present a derivation of the SAF, which quantifies the ability for a heterogeneous system to synchronize by measuring the alignment of the heterogeneity of the nodal dynamics (e.g., oscillators' natural frequencies) with that of the network (as measured through the spectral properties of the Laplacian matrix).
In Sec.~\ref{sec:Derive}, we present the SAF and its connection to phase synchronization. 
In Sec.~\ref{sec:Bound}, we develop upper and lower bounds on the SAF.
In Sec.~\ref{sec:toy_examples}, we study these bounds for two pedagogical network examples.
In Sec.~\ref{sec:saf_example}, we describe a numerical experiment to highlight the  {applicability of using SAF for} optimizing phase synchronization.

\subsection{Phase Synchronization and the SAF}\label{sec:Derive}
 {A main advantage of order parameter $R$ versus $r$ for HLD systems is that $R$ can be solved exactly in terms of the SAF. 
Herein, we obtain a solution $\bm{\theta}^* $ for the phase-locked state of HLD systems given by  Eq.~\eqref{eq:HLD}. Using this solution, we obtain an analytical expression for $R$, which can be succinctly expressed in terms of the SAF. 
 
 We first present a solution to the phase-locked state of HLD systems.}  

\begin{theorem}[Phase-locked State of Heterogeneous Laplacian Dynamics \cite{skardal2014optimal}]
Consider the  {Heterogeneous Laplacian Dynamics given by Eq.~\eqref{eq:HLD}, for which we assume $L$ describes a connected, undirected network, and}
let ${L}^\dagger=\sum_{n=2}^N \lambda_n\bm{v}^{(n)}{\bm{v}^{(n)}}^T$ denote the Moore-Penrose pseudo-inverse \cite{ben2003generalized} of the Laplacian matrix $L$. Then the equilibrium (i.e., phase-locked) solution is given by
\begin{align}
\bm{\theta}^* =  K^{-1}{L}^\dagger \bm{ {\omega}}  + \overline{\theta}\bm{1}\label{eq:stationarySolution},
\end{align}
and the variance order parameter $R$ is given by
\begin{align}
R = 1-J(\bm{ {\omega}},  L)/2K^2 ,\label{eq:rObjective}
\end{align}
 {where $J(\bm{ {\omega}},  L)$ is the synchrony alignment function defined below.}
\begin{proof}
See Appendix~\ref{appendixB}
\end{proof}
\end{theorem}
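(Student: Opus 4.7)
The plan is to compute the phase-locked equilibrium directly from Eq.~\eqref{eq:HLD}, then plug it into the definition of the variance order parameter in Eq.~\eqref{eq:varianceOrderParameter} to read off the SAF.

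First I would pass to a frame co-rotating with the collective frequency. Because $L$ is the combinatorial Laplacian of a connected, undirected graph, $\bm{1}^T L = \bm 0$, so summing Eq.~\eqref{eq:HLD} over $n$ gives $\sum_n d\theta_n/dt = \sum_n \omega_n$, and hence the phase-locked frequency must be $\Omega = N^{-1}\sum_n \omega_n$. Writing $\bm\theta(t) = \tilde{\bm\theta}(t) + \Omega t\, \bm 1$, the locked state satisfies $K L \tilde{\bm\theta}^* = \bm\omega - \Omega \bm 1$. The right-hand side has zero mean by construction, hence lies in the range of $L$ (the orthogonal complement of $\ker L = \mathrm{span}(\bm 1)$), so a solution exists.

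Next I would invert using the Moore--Penrose pseudoinverse. From the spectral decomposition $L = \sum_{n=2}^N \lambda_n \bm v^{(n)}{\bm v^{(n)}}^T$ with $\bm v^{(1)} \propto \bm 1$, one reads off $L^\dagger \bm 1 = \bm 0$ and $L^\dagger L = I - N^{-1}\bm 1 \bm 1^T$. Applying $L^\dagger$ to $KL\tilde{\bm\theta}^* = \bm\omega - \Omega \bm 1$ thus yields $\tilde{\bm\theta}^* = K^{-1} L^\dagger \bm\omega + c\,\bm 1$ for some constant $c$, where I used $L^\dagger \bm 1 = \bm 0$ to drop the $\Omega$ shift. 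Because $\bm 1^T L^\dagger = \bm 0^T$, the vector $L^\dagger \bm\omega$ has zero mean, so taking the mean of both sides identifies $c = \overline{\theta}$. This gives Eq.~\eqref{eq:stationarySolution} after absorbing the $\Omega t\,\bm 1$ drift into the definition of the phase-locked configuration modulo the constant rotation.

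For the second claim I would just substitute: the deviation from the mean phase is $\bm\theta^* - \overline\theta \bm 1 = K^{-1} L^\dagger \bm\omega$, so by symmetry of $L^\dagger$,
\begin{equation*}
\sigma_\theta^2 = \frac{1}{N}\|\bm\theta^* - \overline\theta \bm 1\|_2^2 = \frac{1}{NK^2}\, \bm\omega^T (L^\dagger)^2 \bm\omega.
\end{equation*}
Defining the SAF by $J(\bm\omega, L) = N^{-1}\bm\omega^T (L^\dagger)^2 \bm\omega$ and inserting into Eq.~\eqref{eq:varianceOrderParameter} yields Eq.~\eqref{eq:rObjective}. The only subtle point in the whole argument is the first one, namely keeping track of the one-dimensional kernel of $L$: one must verify that $L^\dagger$ inverts $L$ on the complement of $\bm 1$ and annihilates $\bm 1$ itself, and then use both facts to replace $L^\dagger(\bm\omega - \Omega \bm 1)$ by $L^\dagger \bm\omega$ and to pin down the undetermined constant as $\overline{\theta}$. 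Once that bookkeeping is done, the computation of $R$ is a single line.
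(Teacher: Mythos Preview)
Your proposal is correct and follows essentially the same route as the paper's proof in Appendix~\ref{appendixB}: apply $L^\dagger$ to the phase-locked equation $KL\bm\theta^* = \bm\omega - \Omega\bm 1$, use $L^\dagger\bm 1 = \bm 0$ to kill the $\Omega$ term and to identify the free constant as $\overline{\theta}$, and then substitute into $R = 1 - \sigma_\theta^2/2$. The only cosmetic difference is that you phrase the first step as passing to a co-rotating frame, whereas the paper simply sets $d\theta_n/dt = \Omega$ directly.
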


\begin{definition}[Synchrony Alignment Function  {(SAF)} for Undirected Networks \cite{skardal2014optimal}]
Let $\bm{\omega}$ denote a vector encoding oscillators' natural frequencies and consider an undirected network with Laplacian $L$ having
eigenvalues $0=\lambda_1<\lambda_2\le \lambda_3\le \dots\le\lambda_N$ and corresponding eigenvectors $\{\bm{v}^{(n)}\}$. Let ${L}^\dagger=\sum_{n=2}^N \lambda_n\bm{v}^{(n)}{\bm{v}^{(n)}}^T$ denote the Moore-Penrose pseudo-inverse \cite{ben2003generalized} of $L$. We define the SAF by
\begin{align}
J(\bm{ {\omega}},  L)=N^{-1}|| L^\dagger \bm{\omega}||^2_2 = \frac{1}{N}\sum_{n=2}^N \frac{ ( \bm{\omega}^T\bm{v}^{(n)})^2}{\lambda_n^{2}.\label{eq:Objective}}
\end{align}
\end{definition}

\begin{remark}
Given that the eigenvectors $\{\bm{v}^{(n)}\}$ of $L$ form an orthonormal basis for $\mathbb{R}^N$ and that the terms in the summation of Eq.~\eqref{eq:Objective} are proportional to $1/\lambda_n^2$, the SAF will be smaller (larger) if the frequency vector $\bm{\omega}$ is more strongly aligned with eigenvectors corresponding to large (small) eigenvalues.
\end{remark}

\subsection{Bounding the SAF}\label{sec:Bound}
Equation~\eqref{eq:rObjective} highlights for HLD systems that $R$ can be solved in terms of the SAF, which is advantageous for the optimization of phase synchronization through tuning $R$ (which approximates $r$ in the strong synchronization regime). 
We now develop upper and lower bounds on the SAF and use them to solve the optimization problems of maximizing and minimizing $R$ for a fixed network and natural frequencies with mean $\overline{\omega}=\sum_n\omega_n$ and specified variance $\sigma^2_\omega = N^{-1}\sum_n(\omega_n-\overline{\omega})^2$.

\begin{proposition}[Bounding the SAF  \cite{skardal2014optimal}] 
Consider the SAF given by Eq.~\eqref{eq:Objective}, where the oscillators have natural frequencies with variance $\sigma^2_\omega$ and $L$ denotes the {Laplacian} of an undirected, connected network. The SAF satisfies  
\begin{align}
 \frac{\sigma_\omega^2}{N\lambda_N^2}\le J(\bm{{\omega}},L)\le \frac{\sigma_\omega^2}{N\lambda_2^2} .
\label{eq:Bound1}
\end{align}
\begin{proof}
Recall that the eigenvectors $\{\bm{v}^{(n)}\}$ form an orthonormal basis for $\mathbb{R}^N$. It follows that the frequency vector can be expressed as $\bm{ {\omega}}=\sum_n \alpha_n \bm{v}^{(n)}$ where components $\alpha_n$ are given by $\alpha_n =  \bm{\omega}^T\bm{v}^{(n)}$. After substituting this into Eq.~\eqref{eq:Objective}, we find $J(\bm{ {\omega}},  L) =  N^{-1}\sum_{n=2}^N  { \alpha_n^2}/{\lambda_n^{2}}.$  Note also that $\{\alpha_n\}$ must satisfy the constraint $\sigma_\omega^2 = \sum_{n=2}^N \alpha_n^2$.  We obtain the left-hand inequality by using $\lambda_N^{-2} \le \lambda_n^{-2}$ for any $n$.  We obtain the right-hand inequality by using $\lambda_2^{-2} \ge \lambda_n^{-2}$ for any $n$.
\end{proof}
\end{proposition}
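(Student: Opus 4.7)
The plan is to diagonalize $J(\bm{\omega},L)$ by expanding the frequency vector in the orthonormal Laplacian eigenbasis and then reduce the proposition to an elementary bound on a weighted sum. First, I would write $\bm{\omega} = \sum_{n=1}^N \alpha_n \bm{v}^{(n)}$ with $\alpha_n = \bm{\omega}^T \bm{v}^{(n)}$. Because $L$ is the Laplacian of a connected undirected graph, the smallest eigenvalue $\lambda_1 = 0$ is simple with eigenvector $\bm{v}^{(1)} = \bm{1}/\sqrt{N}$, so $\alpha_1 = \sqrt{N}\,\overline{\omega}$ captures the mean and the remaining $\alpha_n$ ($n\ge 2$) encode the fluctuations. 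Substituting this expansion into the definition of $L^\dagger$ and then into Eq.~\eqref{eq:Objective}, the orthonormality of $\{\bm{v}^{(n)}\}$ collapses the double sum to
\begin{equation*}
J(\bm{\omega},L) \;=\; \frac{1}{N}\sum_{n=2}^N \frac{\alpha_n^2}{\lambda_n^{2}},
\end{equation*}
in which only the nonzero eigenvalues appear, as is required for $L^\dagger$ to be well defined.

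Next, I would identify the constraint that ties the $\alpha_n$ for $n\ge 2$ to the frequency variance. Parseval's identity together with $\alpha_1 = \sqrt{N}\,\overline{\omega}$ gives
\begin{equation*}
\sum_{n=2}^N \alpha_n^2 \;=\; \|\bm{\omega}\|_2^2 - \alpha_1^2 \;=\; \|\bm{\omega}-\overline{\omega}\bm{1}\|_2^2,
\end{equation*}
which I would relate to $\sigma_\omega^2$ using the definition $\sigma_\omega^2 = N^{-1}\|\bm{\omega}-\overline{\omega}\bm{1}\|_2^2$ stated just before the proposition. This effectively makes $J$ a convex combination (after rescaling) of the values $\lambda_n^{-2}$ for $n\in\{2,\dots,N\}$, with weights proportional to $\alpha_n^2$.

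From this representation both inequalities follow by a one-line comparison. For the upper bound I would use $\lambda_n^{-2} \le \lambda_2^{-2}$ for every $n\ge 2$ (because the eigenvalues are ordered $\lambda_2 \le \cdots \le \lambda_N$), pull $\lambda_2^{-2}$ outside the sum, and invoke the variance identity above. The lower bound follows symmetrically from $\lambda_n^{-2} \ge \lambda_N^{-2}$. There is no serious obstacle here: the proposition is essentially the standard min–max (Rayleigh–Ritz) bound on the quadratic form $\bm{\omega}^T (L^\dagger)^2 \bm{\omega}$ restricted to the subspace orthogonal to $\bm{1}$. The only point that requires mild care is keeping the $n=1$ term out of the sum so that the zero eigenvalue does not appear in any denominator — this is automatic, since the component of $\bm{\omega}$ along $\bm{v}^{(1)}$ lies in the nullspace of $L^\dagger$ and therefore contributes nothing to $J$.
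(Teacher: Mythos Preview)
Your proposal is correct and follows essentially the same route as the paper's proof: expand $\bm{\omega}$ in the orthonormal Laplacian eigenbasis, reduce $J$ to the diagonal form $N^{-1}\sum_{n\ge 2}\alpha_n^2/\lambda_n^2$, identify $\sum_{n\ge 2}\alpha_n^2$ with the variance constraint, and bound termwise using $\lambda_N^{-2}\le\lambda_n^{-2}\le\lambda_2^{-2}$. The only difference is that you spell out the Parseval step and the role of $\alpha_1$ more explicitly than the paper does.
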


\begin{corollary}
The maximization and minimization of Eq.~\eqref{eq:Objective} for fixed $L$ over the space of natural frequencies $\{\bm{\omega}:\overline{\omega}=\sum_n\omega_n \text{ and } N^{-1}\sum_n(\omega_n-\overline{\omega})^2=\sigma_\omega^2\}$ have the solutions $\bm{\omega}=\overline{\omega} \pm\sigma_\omega\bm{v}^{(2)}$ and $\bm{\omega}=\overline{\omega} \pm\sigma_\omega\bm{v}^{(N)}$, respectively.
\begin{proof}
Substitution of $\bm{\omega}=\overline{\omega} \pm\sigma_\omega\bm{v}^{(2)}$ into Eq.~\eqref{eq:Objective} recovers the upper bound, whereas substitution of $\bm{\omega}=\overline{\omega} \pm\sigma_\omega\bm{v}^{(N)}$ into Eq.~\eqref{eq:Objective} recovers the lower bound.
\end{proof}
\end{corollary}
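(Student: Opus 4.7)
The plan is to leverage Proposition 3.2 (the SAF bounds): since that proposition already establishes $\sigma_\omega^2/(N\lambda_N^2) \leq J(\bm{\omega}, L) \leq \sigma_\omega^2/(N\lambda_2^2)$ as global lower and upper bounds over the entire admissible frequency set, it suffices to exhibit a feasible $\bm{\omega}$ that saturates each inequality. The candidates stated in the corollary are natural: they concentrate all the ``frequency energy'' along a single eigendirection of the Laplacian, which either amplifies (for $\bm{v}^{(2)}$) or suppresses (for $\bm{v}^{(N)}$) the contribution to the SAF because of its $1/\lambda_n^2$ weighting.

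First, I would verify feasibility. For $\bm{\omega} = \overline{\omega}\bm{1} \pm \sigma_\omega \bm{v}^{(k)}$ with $k \in \{2, N\}$, the orthogonality of $\bm{v}^{(k)}$ to $\bm{v}^{(1)} \propto \bm{1}$ (since $L$ is a Laplacian and eigenvectors for distinct eigenvalues are orthogonal) immediately gives mean $\overline{\omega}$, and the normalization of $\bm{v}^{(k)}$ forces the variance to equal $\sigma_\omega^2$, so the candidate lies in the constraint set. Second, I would substitute into the SAF: using the orthonormality relation $(\bm{v}^{(n)})^T \bm{v}^{(k)} = \delta_{nk}$ together with $\bm{1}^T \bm{v}^{(n)} = 0$ for $n \geq 2$, the overlap $\bm{\omega}^T \bm{v}^{(n)}$ vanishes except at $n=k$, so Eq.~\eqref{eq:Objective} collapses to a single term proportional to $1/\lambda_k^2$, matching the corresponding bound exactly.

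Combining feasibility with attainment of the bounds proved in Proposition 3.2 completes the argument: the choice with $k=2$ attains the upper bound and is therefore a global maximizer, while $k=N$ attains the lower bound and is therefore a global minimizer. The $\pm$ ambiguity is intrinsic (reflecting the $\bm{\omega} \mapsto -\bm{\omega}$ symmetry of $J$), and multiplicity of optimizers can occur precisely when $\lambda_2$ or $\lambda_N$ is a repeated eigenvalue, in which case any unit-norm vector in the corresponding eigenspace works.

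Honestly, there is no serious obstacle here; the only mild subtlety is bookkeeping around the mean constraint (ensuring it decouples from the SAF because the kernel of $L^\dagger$ is precisely $\mathrm{span}(\bm{1})$) and the Laplacian eigenvector normalization used implicitly in writing the constraint variance as $\sigma_\omega^2$. A more self-contained derivation (not relying on the prior proposition) would recast the problem via the spectral expansion $\bm{\omega} = \sum_n \alpha_n \bm{v}^{(n)}$ and solve $\max/\min \sum_{n\geq 2} \alpha_n^2/\lambda_n^2$ subject to $\sum_{n \geq 2} \alpha_n^2 = N\sigma_\omega^2$; this is a one-line rearrangement argument, since the objective is a convex combination of $\{1/\lambda_n^2\}$ which is extremized by concentrating on the largest or smallest weight.
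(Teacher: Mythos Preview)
Your proposal is correct and follows essentially the same approach as the paper's proof: substitute the candidate vectors into Eq.~\eqref{eq:Objective} and observe that they attain the bounds from the preceding proposition. You simply fill in the details the paper omits (explicit feasibility check via orthogonality to $\bm{1}$ and the orthonormality relation collapsing the sum to a single term), and your final paragraph's spectral-expansion argument is exactly the content of the proof of Proposition~3.2 on which the corollary rests.
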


\begin{corollary}\label{co:maxmin}
Considering the system in Eq.~\eqref{eq:HLD}, the maximization and minimization of $R$ given by Eq.~\eqref{eq:varianceOrderParameter} over the space of natural frequencies $\{\bm{\omega}:\overline{\omega}=\sum_n\omega_n \text{ and }N^{-1}\sum_n(\omega_n-\overline{\omega})^2=\sigma_\omega^2\}$ for fixed $L$ have the solutions $\bm{\omega}=\overline{\omega}\pm \sigma_\omega\bm{v}^{(N)}$ and $\bm{\omega}=\overline{\omega}\pm \sigma_\omega\bm{v}^{(2)}$, respectively.
\begin{proof}
From Eq.~\eqref{eq:rObjective}, we can see that $R$ is a linear function  {of $J(\bm{\omega},L)$} so that $\text{argmax}_{\bm{\omega}} R=\text{argmin}_{\bm{\omega}}J(\bm{\omega},L)$ and $\text{argmin}_{\bm{\omega}} R=\text{argmax}_{\bm{\omega}}J(\bm{\omega},L)$.
\end{proof}
\end{corollary}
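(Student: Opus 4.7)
The plan is to reduce Corollary~\ref{co:maxmin} to the preceding corollary by exploiting the fact that, for a fixed network (and hence a fixed Laplacian $L$) and a fixed coupling constant $K\neq 0$, Eq.~\eqref{eq:rObjective} expresses $R$ as a strictly decreasing affine function of the SAF. Since the oscillator-dependent part of the objective is entirely encoded in $J(\bm{\omega},L)$, optimizing $R$ over $\bm{\omega}$ (with the prescribed mean and variance) is equivalent to optimizing $-J(\bm{\omega},L)/(2K^2)$ over the same feasible set; the constant additive term $1$ and the positive multiplicative constant $1/(2K^2)$ do not affect the location of extrema.

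More precisely, I would first invoke Eq.~\eqref{eq:rObjective} to write
\begin{equation*}
R(\bm{\omega}) \;=\; 1 \;-\; \frac{1}{2K^2}\,J(\bm{\omega},L),
\end{equation*}
and observe that the map $J\mapsto 1 - J/(2K^2)$ is strictly monotone decreasing on $\mathbb{R}$ because $1/(2K^2)>0$. Strict monotonicity implies that $\mathrm{argmax}_{\bm{\omega}} R(\bm{\omega}) = \mathrm{argmin}_{\bm{\omega}} J(\bm{\omega},L)$ and $\mathrm{argmin}_{\bm{\omega}} R(\bm{\omega}) = \mathrm{argmax}_{\bm{\omega}} J(\bm{\omega},L)$, provided these optima are taken over the same feasible set. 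Since the feasible set here is identical to the one used in the preceding corollary---namely $\{\bm{\omega}:\overline{\omega}=N^{-1}\sum_n\omega_n \text{ and } N^{-1}\sum_n(\omega_n-\overline{\omega})^2=\sigma_\omega^2\}$---this reduction is immediate.

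I would then simply quote the preceding corollary, which identifies the minimizers of $J(\bm{\omega},L)$ as $\bm{\omega}=\overline{\omega}\pm\sigma_\omega\bm{v}^{(N)}$ (attaining the lower bound $\sigma_\omega^2/(N\lambda_N^2)$) and the maximizers as $\bm{\omega}=\overline{\omega}\pm\sigma_\omega\bm{v}^{(2)}$ (attaining the upper bound $\sigma_\omega^2/(N\lambda_2^2)$). Substituting into the monotone correspondence above gives the claimed answers for the maximizers and minimizers of $R$.

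There is really no hard step: the entire argument is a one-line monotonicity observation applied to Eq.~\eqref{eq:rObjective} together with a citation of the previous corollary. The only subtlety worth flagging is the implicit assumption that $K\in\mathbb{R}\setminus\{0\}$ (so that $1/(2K^2)$ is a well-defined positive number)---a standing assumption throughout Sec.~\ref{sec:model}---and the observation that the constraint set is the same as in the previous corollary, so that the argmin/argmax indeed correspond.
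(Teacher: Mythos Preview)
Your proposal is correct and follows essentially the same approach as the paper's own proof: both observe that Eq.~\eqref{eq:rObjective} makes $R$ an affine (strictly decreasing) function of $J(\bm{\omega},L)$, so that the argmax and argmin of $R$ coincide with the argmin and argmax of $J$, respectively, and then defer to the preceding corollary for the explicit optimizers. Your write-up is slightly more explicit about the positivity of $1/(2K^2)$ and the identity of the feasible sets, but there is no substantive difference in strategy.
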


\begin{remark}
Given the equivalence relation defined in Eq.~\eqref{eq:equivalence}, the maximization of $R$ approximates the maximization of $r$, which is expected to be accurate in the regime of strong synchronization.
\end{remark}

\subsection{SAF for Pedagogical Network Examples}\label{sec:toy_examples}

To provide intuition toward synchrony optimization with the SAF, in this section we study the maximization and minimization of $R$ using the SAF for two pedagogical networks---an undirected chain and a star network.

We first consider an undirected chain, which is shown in Fig.~\ref{fig:toy}(a),(b) and is a network consisting of sequentially linked nodes with end nodes indexed $n=1$ and $N$. The Laplacian matrix for a chain takes the form
\begin{align}
L^{(chain)} = \begin{bmatrix}
1 & -1 & \ldots & 0 & 0 \\
-1 & 2 & \ldots & 0 & 0 \\
\vdots & \vdots & \ddots & \vdots & \vdots \\
0 & 0 & \ldots & 2 & -1 \\
0 & 0 & \ldots & -1 & 1
\end{bmatrix},\label{eq:Chain01}
\end{align}
and has eigenvalues 
\begin{align}
\lambda_n  = 4\sin^2\left(\frac{\pi(n-1)}{2N}\right)
\label{eq:chain_lambda}
\end{align}
and corresponding eigenvectors $\{\bm{v}^{(n)}\}$ with entries 
\begin{align}
 {v}^{(n)}_m = \left\{ 
\begin{array}{rl}
\frac{1}{\sqrt{N}},&n=1\\
\sqrt{\frac{2}{N}}\cos\left(\frac{\pi(n-1)(2m-1)}{2N}\right),&n\ge2 .
\end{array}
\right.
\label{eq:chain_v}
\end{align}
We depict the eigenvectors $\{\bm{v}^{(n)}\}$ for $n\ge2$ in Fig.~\ref{fig:toy}(c). 
It follows that the SAF obtains a minimum value 
\begin{align}
{\min_{\|\bm{\omega}\|=1} J(\bm{\omega},L^{(chain)}) = \frac{1}{N\lambda_N^2}  = \frac{1}{16N\sin^{4}(\pi(N-1)/2N)}}
\end{align}
when $\bm{\omega} = \bm{v}^{(N)}$ and a maximum value 
\begin{align}
{\max_{\|\bm{\omega}\|=1} J(\bm{\omega},L^{(chain)})=  \frac{1}{N\lambda_2^2} = \frac{1}{16N\sin^{4}(\pi/2N)}}
\end{align}
when $\bm{\omega} = \bm{v}^{(2)}$. Recall that the maximization of $R$ corresponds to minimization of the SAF, and vice versa.

\begin{figure}[t]
\centering
\includegraphics[width=1\linewidth]{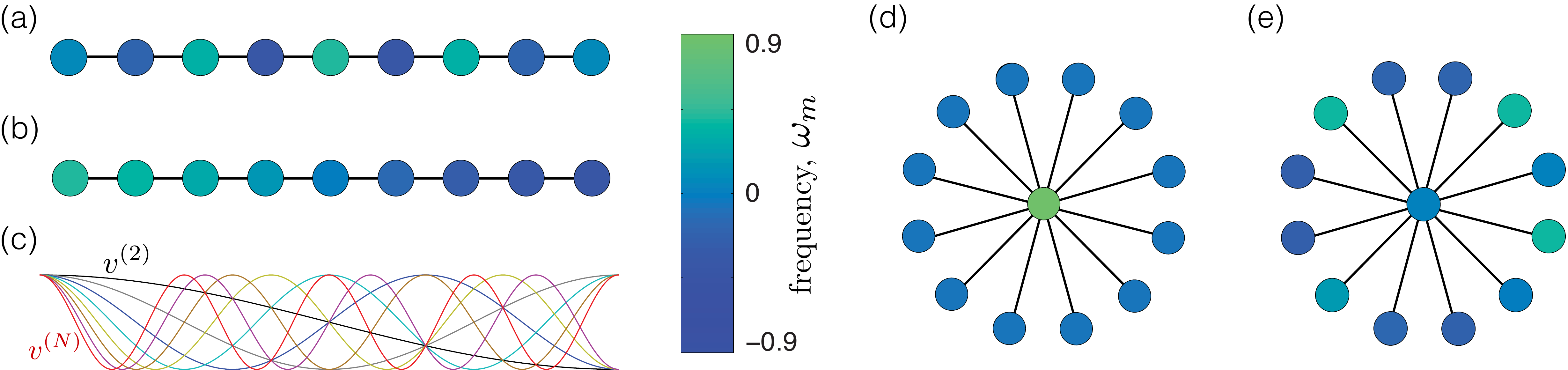}
\caption{
Pedagogical network examples including (a)--(c) a chain network with $N=9$ nodes and (d)--(e) a star network with $N=13$. The nodes' colors indicate the optimal natural frequency $\omega_m$ for each node $m$ that either maximizes $R$ (i.e., $\bm{\omega}=\bm{v}^{(N)}$), which is shown in panels (a) and (d), or minimizes $R$ (i.e., $\bm{\omega}=\bm{v}^{(2)}$), which is shown in panels (b) and (e). Panel (c) depicts the eigenvectors $\{\bm{v}^{(n)}\}$ for the chain.
}
\label{fig:toy}
\end{figure}

We next consider the star network shown in Fig.~\ref{fig:toy}(c),(d) in which there is a central \emph{hub} node with degree $d_1=N-1$ and is connected to \emph{leaf} nodes of degree $d_n=1$ for $n\ge2$. The network Laplacian matrix is given by
\begin{align}
L^{(star)} = \begin{bmatrix}
N-1 & -1 & -1 & \ldots & -1 \\
-1 & 1 & 0 & \ldots & 0 \\
-1 & 0 & 1 & \ldots & 0 \\
\vdots & \vdots & \vdots & \ddots & \vdots \\
-1 & 0 & 0 & \ldots & 1
\end{bmatrix}.\label{eq:Star01}
\end{align}
and has eigenvalues,
\begin{align}
\lambda_{n} = \left\{ 
\begin{array}{rl}
0,&n=1\\
1,&n\in\{2,\dots,N-1\}\\
N,& n=N.
\end{array}
\right.
\label{eq:star_lam}
\end{align}
The corresponding eigenvectors are given by
\begin{align}
\bm{v}^{(1)}&=\frac{1}{\sqrt{N}}[1,\dots,1]^T\nonumber\\
\bm{v}^{(N)}&=\frac{1}{\sqrt{N^2-N}} [N-1,-1,\dots,-1]^T\label{eq:star_v},
\end{align} 
and the remaining eigenvectors $\{ \bm{v}^{(n)}\}$ form an orthonormal basis for the subspace, $\mathbb{R}^N\setminus \text{span}\{\bm{v}^{(1)},\bm{v}^{(N)}\}$. In particular, they must be orthonormal and satisfy $v_1^{(n)}=0$ and $0=\sum_mv_m^{(n)}$. It follows that SAF obtains a minimum value 
\begin{align}
\min_{\|\bm{\omega}\|=1} J(\bm{\omega},L^{(star)}) = \frac{1}{N\lambda_N^2}  =  \frac{1}{N^3}
\end{align}
when $\bm{\omega} = \bm{v}^{(N)}$ and a maximum value
\begin{align}
\max_{\|\bm{\omega}\|=1} J(\bm{\omega},L^{(star)})= \frac{1}{N\lambda_2^2}  = \frac{1}{ N}
\end{align}
when $\bm{\omega} = \bm{v}^{(2)}$.

In Fig.~\ref{fig:toy}, we illustrate (a)--(c) the chain network with $N=9$ nodes and (d)--(e) star network with $N=13$ nodes. We indicate the natural frequency vector $\bm{\omega}$ by node color, and we choose  $\bm{\omega}$ to either (a),(d) maximize $R$ by setting $\bm{\omega}=\bm{v}^{(N)}$---thereby maximizing phase synchronization---or (b),(e) minimize $R$ by setting $\bm{\omega}=\bm{v}^{(2)}$.  In panel (c), we plot the eigenvectors $\{\bm{v}^{(n)}\}$ for the chain network given by Eq.~\eqref{eq:chain_v}, and we point out that expanding $\bm{\omega}$ onto the basis $\{\bm{v}^{(n)}\}$ for a chain is equivalent to a discrete cosine transform. In general, $\bm{v}^{(N)}$ and $\bm{v}^{(2)}$ can be respectively construed as high- and low-frequency eigenvectors due to their oscillatory behavior. We point out that high-frequency eigenvectors are also well known to be prone to localization onto nodes with large degree (c.f. pg.~24 of \cite{taylor2015eigenvector}), and this phenomenon can be observed to occur for the hub in the star network [e.g., see Fig.~\ref{fig:toy}(d) and Eq.~\eqref{eq:star_v}]. Because synchronization is enhanced by aligning $\bm{\omega}$ with the high-frequency vector $\bm{v}^{(N)}$, properties of $\bm{v}^{(N)}$ reveal intuitive properties that enhance synchronization. In particular, synchronization is enhanced by implementing negative correlation between the frequencies of neighboring nodes [e.g., see Fig.~\ref{fig:toy}(a)], as well as by a positive correlation between $|\omega_m|$ and node degree, $d_m$ [e.g., see Fig.~\ref{fig:toy}(d)]. We note that these two types of correlations were previously studied for synchrony optimization for random networks \cite{skardal2014optimal,skardal2016optimal}.

\subsection{Numerical Experiment: Effectiveness of Heterogeneity Alignment}\label{sec:saf_example}
The analysis presented in Sec.~\ref{sec:SAF} has been developed for the strong synchronization regime in which $r\approx R\approx1$. Importantly, as we showed in \cite{skardal2014optimal}, the SAF provides a theoretical framework to optimize phase synchronization of systems with diverse properties, including a wide range of values for the coupling strength $K$. That is, by optimizing a system for the $r\approx R\approx 1$ regime, one inherently widens the parameter space in which the $r\approx R\approx1$ approximation is valid. Moreover, we illustrated the effectiveness of this approach with networks having diverse properties including networks that are  {both small and large as well as both heterogeneous and homogeneous}. In fact, the only assumption is that the network must be connected (see \cite{skardal2016optimal} for a generalization of the SAF for directed networks).

We briefly support this approach with a numerical experiment in which we simulated Eq.~\eqref{eq:Kuramoto1} with $H(\theta)=\sin(\theta)$ for an undirected, random network with $N=500$ nodes and mean degree 4, which we generated using the Erd\H{o}s-R\'enyi model \cite{erdos1960evolution}. We enforced it to be connected by requiring that the nodes have minimum degree $d_{min}=2$. For this network, we simulated oscillators with natural frequencies $\bm{\omega}$ given by either (a) $\bm{v}^{(N)}$, the eigenvector that corresponds to the largest eigenvalue $\lambda_N$, or (b) $\bm{v}^{(2)}$, the eigenvector (i.e., Fiedler vector \cite{fiedler1973algebraic}) that corresponds to the smallest nonzero eigenvalue $\lambda_2$. As shown in \cite{skardal2014optimal} and Corollary~\ref{co:maxmin}, these choices maximize and minimize $R$, respectively. We present results for this experiment in Fig.~\ref{fig:Kuramoto_toy}, where panels (a) and (b) depict phase-locked states at $K=0.8$ for these two choices of natural frequencies. In panel (c), we depict $r$-versus-$K$ synchronization profiles for these two systems. 

\section{Perturbation Analysis of the SAF}\label{sec:Perturb}
In this section, we develop a perturbation analysis for how the SAF [see Eq.~\eqref{eq:Objective}] is affected by structural network modifications. This analysis is built upon classical matrix perturbation theory. In Sec.~\ref{sec:Classic}, we present classical results for the perturbation of simple eigenvalues and eigenvectors of a symmetric matrix. 
In Sec.~\ref{sec:General}, we analyze general perturbations in which the Laplacian matrix $L$ undergoes a symmetric perturbation. In Sec.~\ref{sec:Add}, we study the addition and removal of edges.  In Sec.~\ref{sec:accuracy1}, we support the accuracy of the first-order approximation with a numerical experiment.

\subsection{Classical Spectral Perturbation Results \cite{atkinson2008introduction}}\label{sec:Classic}
We begin by presenting a well-known result that describes the first-order perturbation of eigenvalues and eigenvectors of a symmetric matrix $L$.

\begin{theorem}[Perturbation of Simple Eigenvalues and their Eigenvectors \cite{atkinson2008introduction}] 
Let $L$ be a symmetric matrix with simple eigenvalues $\{\lambda_n\}$ and normalized eigenvectors $\{\bm v^{(n)}\}$.
Consider a fixed symmetric perturbation matrix $\Delta L$, and let $L(\epsilon)=L+\epsilon \Delta L$.
Denote the eigenvalues  {and eigenvectors} of $L(\epsilon)$ by $\lambda_n(\epsilon)$  {and $\bm{v}^{(n)}(\epsilon)$, respectively, for $n=1,2,\dots,N$.
It follows that
\begin{align}
\lambda_n(\epsilon)  &= \lambda_n + \epsilon \lambda'(0) + \mathcal{O}(\epsilon^2),\nonumber \\
{\bm v^{(n)}} (\epsilon) &= \bm v^{(n)} + \epsilon {\bm{v}^{(n)}}'(0) + \mathcal{O}(\epsilon^2),
\label{eq:First2a}
\end{align}
and the derivatives with respect to $\epsilon$ at $\epsilon=0$ are given by
\begin{align}
\lambda_n'(0) & = (\bm v^{(n)})^T \Delta L \bm v^{(n)}\nonumber \\
 {\bm{v}^{(n)}}'(0) &= \sum_{{m\not=n}} \frac{ (\bm v^{(m)})^T \Delta L \bm v^{(n)}}{\lambda_n-\lambda_m} \bm{v}^{(m)}.
\label{eq:First2}
\end{align}}
\end{theorem}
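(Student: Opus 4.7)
The plan is to derive both formulas by differentiating the eigenvalue equation $L(\epsilon)\bm{v}^{(n)}(\epsilon) = \lambda_n(\epsilon)\bm{v}^{(n)}(\epsilon)$ with respect to $\epsilon$ and evaluating at $\epsilon = 0$. Before doing so, I would first invoke the classical analytic perturbation theorem (Rellich--Kato) for symmetric matrices with simple eigenvalues, which guarantees that both $\lambda_n(\epsilon)$ and $\bm{v}^{(n)}(\epsilon)$ admit convergent power-series expansions in $\epsilon$ in a neighborhood of $0$. This justifies the existence of $\lambda_n'(0)$ and ${\bm v^{(n)}}'(0)$ and the Taylor expansions written in \eqref{eq:First2a}; the hardest conceptual point of the theorem is hidden here, because if eigenvalues coalesce one loses differentiability, so the simplicity hypothesis is essential.

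Next I would differentiate the eigenvalue equation. Writing $L(\epsilon) = L + \epsilon\,\Delta L$ and differentiating the identity $L(\epsilon)\bm{v}^{(n)}(\epsilon) = \lambda_n(\epsilon)\bm{v}^{(n)}(\epsilon)$, then setting $\epsilon = 0$, I obtain the first-order relation
\begin{equation}
\Delta L\,\bm v^{(n)} + L\,{\bm v^{(n)}}'(0) \;=\; \lambda_n'(0)\,\bm v^{(n)} + \lambda_n\,{\bm v^{(n)}}'(0).
\label{eq:plan_first}
\end{equation}
To isolate $\lambda_n'(0)$, I would left-multiply by $(\bm v^{(n)})^T$ and exploit the symmetry of $L$, which gives $(\bm v^{(n)})^T L = \lambda_n (\bm v^{(n)})^T$, together with normalization $(\bm v^{(n)})^T\bm v^{(n)} = 1$. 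The two inner products involving ${\bm v^{(n)}}'(0)$ then cancel, leaving $\lambda_n'(0) = (\bm v^{(n)})^T \Delta L\,\bm v^{(n)}$, which is the first formula of \eqref{eq:First2}.

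For the eigenvector formula, I would expand ${\bm v^{(n)}}'(0)$ in the orthonormal basis $\{\bm v^{(m)}\}$ provided by the symmetric matrix $L$, writing ${\bm v^{(n)}}'(0) = \sum_{m=1}^{N} c_m \bm v^{(m)}$. Substituting into \eqref{eq:plan_first} and taking the inner product with $\bm v^{(m)}$ for $m \neq n$ yields $(\bm v^{(m)})^T\Delta L\,\bm v^{(n)} + \lambda_m c_m = \lambda_n c_m$, so that $c_m = (\bm v^{(m)})^T\Delta L\,\bm v^{(n)}/(\lambda_n - \lambda_m)$, which is well defined precisely because the eigenvalues are simple. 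To fix the remaining coefficient $c_n$, I would differentiate the normalization constraint $\|\bm v^{(n)}(\epsilon)\|_2^2 = 1$, obtaining $2(\bm v^{(n)})^T {\bm v^{(n)}}'(0) = 0$ and hence $c_n = 0$. Combining the coefficients reproduces the second formula of \eqref{eq:First2}, and the Landau remainders in \eqref{eq:First2a} follow from the analyticity invoked at the outset.

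The main obstacle is not the algebra, which is routine once \eqref{eq:plan_first} is in hand, but rather the justification of differentiability. I would therefore emphasize that the simplicity of each $\lambda_n$ is what makes the denominators $\lambda_n - \lambda_m$ nonzero and what makes the branch $\bm v^{(n)}(\epsilon)$ analytic in $\epsilon$; without this, the expansions in \eqref{eq:First2a} can fail. With that caveat addressed by citing the relevant result in \cite{atkinson2008introduction}, the remainder of the argument is a short symmetry-and-orthogonality computation.
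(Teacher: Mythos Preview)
Your proposal is correct and follows the standard textbook derivation of first-order eigenvalue and eigenvector perturbations for symmetric matrices. The paper itself does not give a proof of this statement at all; it simply writes ``See \cite{atkinson2008introduction}'' and moves on, treating the result as a classical fact to be quoted rather than derived. Thus your write-up already contains strictly more than the paper does for this particular theorem, and the argument you outline---differentiating $L(\epsilon)\bm v^{(n)}(\epsilon)=\lambda_n(\epsilon)\bm v^{(n)}(\epsilon)$, projecting onto $\bm v^{(n)}$ for the eigenvalue derivative, projecting onto $\bm v^{(m)}$ with $m\neq n$ for the off-diagonal coefficients, and differentiating the normalization to kill $c_n$---is exactly the proof one finds in the cited reference.
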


\begin{proof}
See \cite{atkinson2008introduction}.
\end{proof}
 {\begin{remark} 
Note for $n=1$ that $\lambda_1(\epsilon)=0$ and $\bm{v}^{(1)}(\epsilon)= N^{-1/2}\bm{1}$ for any $\epsilon$ since the perturbation $\Delta L$ has the same null space as $L$, which is $\text{span}(\bm{1}) $.
\end{remark} }

Due to continuity, the approximations in Eq.~\eqref{eq:First2a} are accurate when the perturbations are small. However, the regime for which such approximation is valid (i.e., how small $\epsilon$ needs to be) generally depends on  {$L$, $\epsilon$, and} the perturbation matrix $\Delta L$.

\subsection{General Network Perturbations}\label{sec:General}
We now present a first-order expansion of the SAF that is analogous to the expansions given by Eq.~\eqref{eq:First2a}.

\begin{theorem}[Perturbation of the SAF under a Network Modification]
Let $J(\bm{\omega},L)$ denote the SAF given by Eq.~\eqref{eq:Objective} for natural frequencies $\bm{\omega}$ and symmetric network Laplacian $L$, and let $J(\bm{\omega},L(\epsilon))$ denote the SAF for the network after it undergoes a  {symmetric} modification $\epsilon \Delta L$. Assume the eigenvalues of $L$ and $L(\epsilon)=L+\epsilon \Delta L$ are simple, and that the original and perturbed networks are both connected. Then the first-order expansion in $\epsilon$ for  {the perturbed SAF is given by
\begin{align}
J(\bm{\omega},L(\epsilon)) = J(\bm{\omega},L) + \epsilon J'(\epsilon) +\mathcal{O}(\epsilon^2),\label{eq:First_J}
\end{align}
where
\begin{align}
J'(\epsilon) &=     \frac{2}{N}\sum_{n=2}^N   \left(\frac{ \bm{\omega}^T  \bm v^{(n)} }{   \lambda_n^3}  \right)
\left( {\sum_{m=2}^{N}}
\frac{ [ \bm{\omega}^T\bm v^{(m)}][(\bm v^{(m)})^T \Delta L \bm v^{(n)}]}{(1-\lambda_m/\lambda_n)-\delta_{nm}}   \right) .
\end{align}
}

\begin{proof}
See Appendix~\ref{appendixC}.
\end{proof}
\end{theorem}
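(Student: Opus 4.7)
The plan is to compute $J'(0)$ directly from the spectral expansion of the SAF by applying the chain rule to the formula in Eq.~\eqref{eq:Objective} and then substituting the first-order perturbation identities for simple eigenvalues and eigenvectors stated just above. The argument naturally splits into three steps: spectral setup, term-by-term differentiation, and an algebraic repackaging.

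First, I would write
\begin{align*}
J(\bm\omega, L(\epsilon)) = \frac{1}{N}\sum_{n=2}^N \frac{\bigl(\bm\omega^T \bm v^{(n)}(\epsilon)\bigr)^2}{\lambda_n(\epsilon)^2}.
\end{align*}
Because $\Delta L$ is a symmetric Laplacian-type perturbation annihilating $\bm 1$, the trivial mode persists, $\lambda_1(\epsilon)\equiv 0$ and $\bm v^{(1)}(\epsilon)\equiv N^{-1/2}\bm 1$, so the sum legitimately starts at $n=2$ for every sufficiently small $\epsilon$ and no boundary artifact appears. Under the simplicity hypothesis, the remaining $\lambda_n(\epsilon)$ and $\bm v^{(n)}(\epsilon)$ are smooth in $\epsilon$ near $0$, which licenses differentiating the sum termwise.

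Second, the chain/quotient rule at $\epsilon=0$ gives
\begin{align*}
J'(0) = \frac{2}{N}\sum_{n=2}^N\left[\frac{\bigl(\bm\omega^T \bm v^{(n)}\bigr)\bigl(\bm\omega^T {\bm v^{(n)}}{}'(0)\bigr)}{\lambda_n^2} - \frac{\bigl(\bm\omega^T \bm v^{(n)}\bigr)^2\,\lambda_n'(0)}{\lambda_n^3}\right].
\end{align*}
I would then substitute $\lambda_n'(0)=(\bm v^{(n)})^T\Delta L\,\bm v^{(n)}$ and the spectral expansion of ${\bm v^{(n)}}{}'(0)$ to obtain $\bm\omega^T{\bm v^{(n)}}{}'(0)=\sum_{m\neq n}\frac{(\bm v^{(m)})^T\Delta L\,\bm v^{(n)}}{\lambda_n-\lambda_m}\bigl(\bm\omega^T\bm v^{(m)}\bigr)$. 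Note that the $m=1$ term vanishes since $\bm 1^T\Delta L=0$, so the inner sum effectively runs over $m\in\{2,\dots,N\}\setminus\{n\}$.

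The main obstacle---really just bookkeeping---is merging the two contributions into the single compact double sum with denominator $(1-\lambda_m/\lambda_n)-\delta_{nm}$ that appears in the statement. To do this, I would factor $\lambda_n^{-3}$ and $\bm\omega^T\bm v^{(n)}$ out of every term. The eigenvector piece, for $m\neq n$, then carries the factor $\lambda_n/(\lambda_n-\lambda_m)=1/(1-\lambda_m/\lambda_n)$, while the eigenvalue piece lives only on the diagonal $m=n$ and contributes an additive $-1$ factor there. Encoding the ``$-1$'' via a Kronecker shift in the denominator---so that the $m=n$ summand reads $1/[(1-1)-1]=-1$ and each $m\neq n$ summand reads $1/(1-\lambda_m/\lambda_n)$---fuses both contributions into the single inner sum $\sum_{m=2}^N [\bm\omega^T\bm v^{(m)}][(\bm v^{(m)})^T\Delta L\,\bm v^{(n)}]/\bigl[(1-\lambda_m/\lambda_n)-\delta_{nm}\bigr]$ in the theorem. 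Finally, applying Taylor's theorem to the smooth scalar map $\epsilon\mapsto J(\bm\omega,L(\epsilon))$ supplies the $\mathcal{O}(\epsilon^2)$ remainder, completing the derivation (with the line-by-line algebra deferred to Appendix~\ref{appendixC}).
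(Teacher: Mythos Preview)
Your proposal is correct and follows essentially the same route as the paper: differentiate the spectral sum termwise via the quotient rule, substitute the standard first-order formulas for $\lambda_n'(0)$ and ${\bm v^{(n)}}'(0)$, and then repackage the eigenvalue and eigenvector contributions into a single inner sum using the $(1-\lambda_m/\lambda_n)-\delta_{nm}$ denominator. Your explicit observation that the $m=1$ term drops out because $\bm 1^T\Delta L=0$ is a nice clarification that the paper leaves implicit (its appendix sum starts at $m=1$ while the theorem statement starts at $m=2$).
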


\begin{remark}\label{remark:rem1}
Due to continuity, Eq.~\eqref{eq:First_J} is accurate when the perturbation is small, i.e., $|\Delta J| \ll J$. Because Eq.~\eqref{eq:First_J} relies on Eq.~\eqref{sec:Classic}, one heuristic to ensure accuracy is that we require Eq.~\eqref{sec:Classic} to be accurate for every eigenvalue, which is expected when $\epsilon (\bm v^{(n)})^T \Delta L \bm v^{(n)}/\lambda_n \ll1$ for every $n=2,3,\dots,N$. (Recall that $\lambda_1$ is always zero.) This suggests $\epsilon/\lambda_2\ll1$, and we provide numerical support for this heuristic in Sec.~\ref{sec:accuracy1}. 
However, we conjecture that this heuristic may be too strong (i.e., sufficient but not necessary). We  {consider} $\epsilon/\overline{\lambda}\ll1$ to be a reasonable heuristic in many situations, where $\overline{\lambda}=N^{-1}\sum_{n} \lambda_n$. 
\end{remark}
\begin{remark}
The computation of Eq.~\eqref{eq:First_J} requires  {$\mathcal{O}(MN+N^2)$}
multiplications, where $M$ is the number of nonzero entries in $\Delta L$. In contrast, direct computation of the new SAF requires solving $N-1$ eigenvalues and eigenvectors, which  {typically} involves $\mathcal{O}(N^3)$ multiplications  {in practice}, and computing Eq.~\eqref{eq:Objective} involves  {$\mathcal{O}(N^2)$}
multiplications. Therefore, for large networks and sparse $\Delta L$ (i.e., $M\ll  {\mathcal{O}(N^2)}$), the perturbation result is much more efficient to compute, and in particular, it is $\mathcal{O}(N^2)$ versus $\mathcal{O}(N^3)$.
\end{remark}

\subsection{Edge Additions and Removals}\label{sec:Add}
Equation~\eqref{eq:First_J} gives a first-order approximation to the change in the SAF due to any symmetric perturbation $\epsilon\Delta L$ of the Laplacian $L$. We now provide a more specific result for the addition and removal of undirected, unweighted edges. 

\begin{corollary}[Perturbation of the SAF under Edge Modifications]\label{Corr_11}
Consider the SAF given by Eq.~\eqref{eq:Objective} and the  {perturbation} of undirected edge $(p,q)$  {(e.g., $A_{pq} \mapsto A_{pq}  \pm \epsilon$ and $A_{pq} \mapsto A_{pq}  \pm \epsilon$)} and define 
\begin{align}
Q_{pq}&=
 \frac{2}{N}\sum_{n=2}^N   \left(\frac{ \bm{\omega}^T  \bm v^{(n)} }{   {\lambda_n ^3}}  \right)
\left(\sum_{{m=1}}^N \frac{ [ \bm{\omega}^T\bm v^{(m)}][(\bm v^{(m)}_p-\bm v^{(m)}_q)(\bm v^{(n)}_p-\bm v^{(n)}_q)]}{(1{ -\lambda_m/\lambda_n) -\delta_{nm}} }   \right) ,
 \label{eq:pert_1edge}
\end{align}
then Eq.~\eqref{eq:First_J} has the simplified form
\begin{align}
 {J(\bm{\omega},L(\epsilon)) = J(\bm{\omega},L)}  \pm \epsilon Q_{pq} +\mathcal{O}(\epsilon^2), \label{eq:dJ_1edges}
\end{align}
where $+$ and $-$ correspond to edge addition and subtraction, respectively.

 \begin{proof}
 See Appendix \ref{appendixD}.
 \end{proof}
\end{corollary}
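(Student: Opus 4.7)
The plan is to reduce Corollary~4.3 to Theorem~4.2 by identifying the correct rank-one form of $\Delta L$ for a single edge modification. Concretely, adding an undirected unweighted edge between nodes $p$ and $q$ with infinitesimal weight $\epsilon$ changes $A_{pq}$ and $A_{qp}$ by $+\epsilon$, which modifies the Laplacian entries by $\Delta L_{pp}=\Delta L_{qq}=+1$ and $\Delta L_{pq}=\Delta L_{qp}=-1$. All of these can be packaged as
\begin{equation*}
\Delta L \;=\; (\bm{e}_p-\bm{e}_q)(\bm{e}_p-\bm{e}_q)^T,
\end{equation*}
where $\bm{e}_p$ denotes the $p$-th standard basis vector of $\mathbb{R}^N$. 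Edge removal corresponds to replacing $\epsilon$ by $-\epsilon$, which accounts for the $\pm$ sign in the statement.

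Next I would evaluate the bilinear forms that appear in Theorem~4.2. Using the rank-one structure above, for any eigenvectors $\bm{v}^{(m)}, \bm{v}^{(n)}$ we get
\begin{equation*}
(\bm{v}^{(m)})^T \Delta L \,\bm{v}^{(n)} \;=\; (v^{(m)}_p - v^{(m)}_q)(v^{(n)}_p - v^{(n)}_q).
\end{equation*}
Substituting this into the inner sum of Eq.~\eqref{eq:First_J} reproduces the inner sum in the definition of $Q_{pq}$ in Eq.~\eqref{eq:pert_1edge}, so Theorem~4.2 immediately yields $J(\bm{\omega},L(\epsilon)) = J(\bm{\omega},L) \pm \epsilon Q_{pq} + \mathcal{O}(\epsilon^2)$.

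The only bookkeeping step is reconciling the summation range: Theorem~4.2 sums over $m\ge 2$, whereas the corollary sums over $m\ge 1$. I would observe that the $m=1$ term is harmless since $\bm{v}^{(1)} = N^{-1/2}\bm{1}$ is constant, giving $v^{(1)}_p - v^{(1)}_q = 0$ and thus a vanishing contribution; hence extending the lower index to $m=1$ is cosmetic. Finally I would verify the hypotheses of Theorem~4.2 apply, namely that both $L$ and $L(\epsilon)$ have simple spectra and remain the Laplacians of connected graphs---this is part of the standing assumption and is preserved generically under a single-edge perturbation.

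There is no real obstacle here; the main thing to get right is the sign convention and the identification of $\Delta L$ as the outer product of $\bm{e}_p - \bm{e}_q$ with itself. Once this is in place, the corollary is a direct specialization of Theorem~4.2, and the complete argument is given in Appendix~\ref{appendixD}.
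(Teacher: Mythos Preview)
Your proposal is correct and follows essentially the same route as the paper: identify $\Delta L$ for a single edge as the rank-one matrix $(\bm e_p-\bm e_q)(\bm e_p-\bm e_q)^T$ (equivalently, the matrix $\Delta L^{(pq)}$ of Eq.~\eqref{eq:DL}), compute $(\bm v^{(m)})^T\Delta L\,\bm v^{(n)}=(v^{(m)}_p-v^{(m)}_q)(v^{(n)}_p-v^{(n)}_q)$, and substitute into Theorem~4.2. Your extra remark reconciling the $m\ge 1$ versus $m\ge 2$ summation range via $v^{(1)}_p-v^{(1)}_q=0$ is a nice clarification that the paper's Appendix~\ref{appendixD} leaves implicit.
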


\begin{corollary}[Perturbation of the SAF under Subgraph Rewiring]\label{Corr_12}
Consider the SAF given by Eq.~\eqref{eq:Objective} and a network in which a set of edges $\mathcal{E}^{(+)}\subseteq\{1,\dots,N\}\times \{1,\dots,N\}$ are added and a set of edges $\mathcal{E}^{(-)}\subseteq\{1,\dots,N\}\times \{1,\dots,N\}$ are removed, then Eq.~\eqref{eq:First_J} has the simplified form
\begin{align}
 {J(\bm{\omega},L(\epsilon)) = J(\bm{\omega},L)} +  \sum_{(p,q)\in\mathcal{E}^{(+)}}  \epsilon Q_{pq} -\sum_{(p,q)\in\mathcal{E}^{(-)}}  \epsilon Q_{pq} +\mathcal{O}(\epsilon^2) . \label{eq:dJ_2edges}
\end{align}
\begin{proof}
See Appendix \ref{appendixE}
\end{proof}
\end{corollary}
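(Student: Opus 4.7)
My plan is to reduce Corollary~3.2 to Corollary~3.1 by exploiting the linearity of the first-order perturbation formula in Eq.~\eqref{eq:First_J} with respect to $\Delta L$. The key observation is that for an unweighted, undirected graph the Laplacian decomposes as a sum of rank-one ``edge Laplacians'': adding the edge $(p,q)$ corresponds to adding the matrix $\Delta L^{(p,q)} = (\bm{e}_p-\bm{e}_q)(\bm{e}_p-\bm{e}_q)^T$ to $L$, where $\bm{e}_p$ denotes the $p$-th standard basis vector, while removing $(p,q)$ corresponds to adding $-\Delta L^{(p,q)}$. Hence the total perturbation associated with the rewiring specified by $\mathcal{E}^{(+)}$ and $\mathcal{E}^{(-)}$ is
\begin{equation*}
\epsilon\,\Delta L \;=\; \epsilon \sum_{(p,q)\in\mathcal{E}^{(+)}} \Delta L^{(p,q)} \;-\; \epsilon \sum_{(p,q)\in\mathcal{E}^{(-)}} \Delta L^{(p,q)}.
\end{equation*}

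Next I would substitute this decomposition into the formula for $J'(\epsilon)$ appearing in Eq.~\eqref{eq:First_J}. Because that expression depends on $\Delta L$ only through the scalar quadratic forms $(\bm{v}^{(m)})^T \Delta L \bm{v}^{(n)}$, and such quadratic forms are linear in $\Delta L$, the contribution $J'(\epsilon)$ splits as a sum indexed by the edges in $\mathcal{E}^{(+)}\cup\mathcal{E}^{(-)}$ with the appropriate signs. For each single edge $(p,q)$, the corresponding summand recovers exactly $Q_{pq}$ from Eq.~\eqref{eq:pert_1edge}, as established in the proof of Corollary~3.1 (Appendix~D), since $(\bm v^{(m)})^T \Delta L^{(p,q)} \bm v^{(n)} = (\bm v^{(m)}_p - \bm v^{(m)}_q)(\bm v^{(n)}_p - \bm v^{(n)}_q)$. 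Collecting the signed sum then produces exactly the expression in Eq.~\eqref{eq:dJ_2edges}, up to the $\mathcal{O}(\epsilon^2)$ remainder inherited from Eq.~\eqref{eq:First_J}.

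The step that requires the most care is the handling of the remainder term. Applying the first-order expansion edge-by-edge and superposing the results is legitimate only because the expansion in Theorem~3.2 is a single expansion about $\epsilon=0$ with a fixed (but structured) perturbation matrix $\Delta L$, rather than many independent expansions. I would therefore emphasize that linearity is used at the level of the coefficient $J'(0)$, not by summing independently derived Taylor expansions, so there is no cross-term obstruction. The $\mathcal{O}(\epsilon^2)$ remainder holds for fixed $|\mathcal{E}^{(+)}|+|\mathcal{E}^{(-)}|$ as $\epsilon\to 0$, under the same non-degeneracy and connectivity hypotheses as before (simple eigenvalues of both $L$ and $L(\epsilon)$, and connectedness of the perturbed network). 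Finally, I would note that the implicit assumption that the spectrum of $L+\epsilon \Delta L$ remains simple and that the graph remains connected must be checked in practice; this is the only subtlety beyond the essentially algebraic superposition argument, and is the chief obstacle when the rewiring set is large relative to the spectral gap, consistent with the heuristic discussed in Remark~3.1.
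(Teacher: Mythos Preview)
Your proposal is correct and follows essentially the same approach as the paper: decompose $\Delta L$ as a signed sum of single-edge Laplacians $\Delta L^{(pq)}$, use linearity of the quadratic forms $(\bm v^{(m)})^T \Delta L \bm v^{(n)}$ to split the first-order coefficient $J'(0)$ edge-by-edge, and recognize each summand as $\pm Q_{pq}$ from Corollary~\ref{Corr_11}. Your additional remarks about the remainder term and the simplicity/connectivity hypotheses are valid clarifications, but the core argument matches the paper's proof in Appendix~E.
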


\subsection{Numerical Experiment:  {Validation} of the First-Order Approximation}\label{sec:accuracy1}

We  {now} present a numerical experiment to illustrate the accuracy of Eq.~\eqref{eq:First_J} and Eq.~\eqref{eq:dJ_1edges} by comparing predicted and observed values of the SAF upon edge additions. In particular, we considered a system given by Eq.~\eqref{eq:HLD} in which the natural frequencies $\{\omega_n\}$ were randomly drawn from a normal distribution, and we constructed undirected, scale-free networks using the configuration model \cite{bekessy1972asymptotic}. We generated networks with degrees $\{d_i\}$ following the distribution $P(d)\propto d^{-\gamma}$ with $\gamma=2.5$, and either (a) $N=100$ and $d_{{min}}=5$ or (b) $N=250$ and $d_{{min}}=25$. We considered single-edge additions for each system, and for each new edge $(p,q)$, we compared the observed change to the SAF,  {$\Delta J=J(\bm{\omega},L(\epsilon)) - J(\bm{\omega},L)$}, and the first-order approximation $Q_{pq}$ given by Eqs.~\eqref{eq:pert_1edge}  {and~\eqref{eq:dJ_1edges}}.

\begin{figure*}[t]
\centering
\includegraphics[width=1\linewidth]{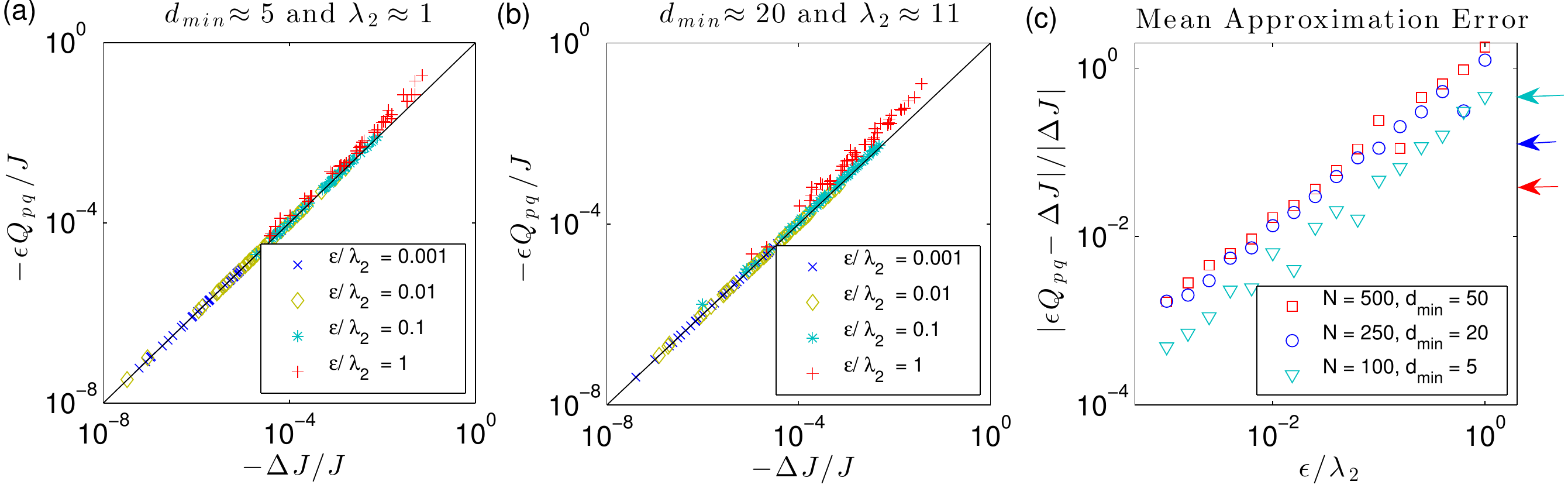}
\caption{ 
Approximation accuracy of Eq.~\eqref{eq:dJ_1edges} for the addition of 50 randomly selected edges. (a),(b) Scatter plots of the first-order prediction $ Q_{pq}$ versus actual change $\Delta J$ to SAF after we add an edge to scale-free networks, which we constructed using the configuration model \cite{bekessy1972asymptotic} with exponent $\gamma=2.5$ and either (a) $N=100$ and $d_{{min}}=5$ or  (b) $N=250$ and $d_{{min}}=25$. By varying $\epsilon$, we show results for several choices of $\epsilon/{\lambda}_2$. 
(c)~We plot the mean approximation error versus $\epsilon/\lambda_2$ for networks of different size $N$ and minimum degree $d_{min}$. Results indicate the mean across 50 randomly selected edge additions. The arrows indicate the error when $\epsilon=1$, which vanishes with growing $\lambda_2$.
}
\label{fig:accuracy1}
\end{figure*}

We plot these results in Fig.~\ref{fig:accuracy1}, and we describe the perturbation size in terms of the ratio $\epsilon/{\lambda}_2$ (see Remark \ref{remark:rem1}). In panels (a) and (b), we plot predicted versus true values of $\Delta J$ for various values of $\epsilon$ for two scale-free networks. Results indicate 50 randomly selected edge additions. In panel (c), we plot the mean approximation error---that is, the mean  {fractional error, $|\epsilon Q_{pq}-\Delta J|/|\Delta J|$}, across 50 edge additions---as a function of $\epsilon/\lambda_2$, for several networks of different size and minimum degree. The arrows indicate the approximation error when $\epsilon=1$ (i.e., the addition of an undirected edge). Our first observation is that the approximation error vanishes with growing network size and density (i.e., increasing $d_{min}$). For example, the mean error is approximately 2\% for the network with $N=500$ and $d_{min}=50$, whereas it is approximately 40\% for the network with $N=100$ and $d_{min}=5$. Our second observation is that even when the mean approximation error is somewhat large (e.g., 40\%), Eq.~\eqref{eq:pert_1edge} still captures the correct magnitude of the perturbation  {of $J$}, and this is significant because $\Delta J$ can vary by several orders of magnitude for the different edge perturbations [see panels (a) and (b)].

\section{ {Ranking Edges via Perturbation to the SAF}}\label{sec:OptimalSection}

 {In this section}, we use our perturbation analysis as a centrality measure \cite{taylor2015eigenvector} to rank the edges and potential edges according to their importance to the SAF.  {This ranking is akin to other rankings that are specific to a particular class of dynamics, including PageRank (which is important to random walks \cite{gleich2015pagerank} and collective behavior \cite{skardal2015collective}) and dynamical importance \cite{restrepo2006characterizing} (which is important to dynamics ranging from epidemic spreading to synchronization). For the ranking that we introduce here, the top-ranked edge is the one that yields the minimal SAF, and therefore maximal $R$} upon its removal. Similarly, the top-ranked potential edge is  {one that yields the minimal SAF, and therefore maximal $R$} upon its addition. Importantly, this approach takes takes into account  {both the structure \emph{and} dynamics of the} system---that is, both the particular network structure and the oscillators' heterogeneous natural frequencies.

This section is organized as follows: 
In Sec.~\ref{sec:ranking}, we rank the edges according to their importance to the SAF (and thus phase synchronization). 
In Sec.~\ref{sec:Maximize}, we define a class of optimization problem that maximizes phase synchronization with edge modifications. 
In Sec.~\ref{sec:toy_2}, we identify the top-ranked potential edges that can be added to the pedagogical chain network.

\subsection{Ranking Edges According to the SAF}\label{sec:ranking}

We first introduce some notation. Let $G(\mathcal{V},\mathcal{E})$ define a network with a set of nodes $\mathcal{V}=\{1,\dots,N\}$ and a set of {undirected} edges, $\mathcal{E}\subseteq\mathcal{V}\times \mathcal{V}$. We disallow self-edges so that $\{ (n,n)\}\cap\mathcal{E}=\emptyset$. For a given set of edges $\mathcal{E}$, we define a set of complementary edges (i.e., potential edges) 
$\mathcal{PE}=\mathcal{V}\times\mathcal{V}\setminus\left(\mathcal{E}\cup \{ (n,n)\} \right)$. The sets $\mathcal{E}$ and $\mathcal{PE}$ define the edges that can  {potentially} be removed and added, respectively.

 {We now introduce the rankings.}

\begin{definition}[SAF-Induced Ranking of Edges]
Given a connected network $ {G}=(\mathcal{V},\mathcal{E})$ with symmetric Laplacian matrix $L$ and a frequency vector $\bm{ \omega}$, we rank each edge $(p,q)\in\mathcal{E}$ according to the first-order approximation for the perturbation of the SAF that is induced by its removal, $\Delta J\approx  {-Q_{pq}}$. 
Specifically, we define
\begin{align}
X(p,q) = 1+ | \mathcal{E}'|,~\text{where}~ \mathcal{E}' = \{ (n,m)\in\mathcal{E} : Q_{nm} > Q_{pq}\}\label{eq:rank1}
\end{align}
so that $X(p,q)\in\{1,\dots,|\mathcal{E}|\}$ defines the rank of each edge $(p,q)\in\mathcal{E}$. 
\end{definition}

\begin{definition}[SAF-Induced Ranking of Potential Edges]
Given a connected network $ {G}=(\mathcal{V},\mathcal{E})$ with symmetric Laplacian matrix $L$ and a frequency vector $\bm{\omega}$, we rank each potential edge $(i,j)\in\mathcal{PE}$ according to the first-order approximation for the perturbation of the SAF that is induced by its addition, $\Delta J\approx Q_{pq}$. We define
\begin{align}
Y(p,q) = 1+  | \mathcal{PE}'|,~\text{where}~ \mathcal{PE}' = \{ (n,m)\in\mathcal{PE} : Q_{nm} <Q_{pq}\}\label{eq:rank2}
\end{align}
so that $Y(p,q)\in\{1,\dots,|\mathcal{PE}|\}$ defines the rank of each potential edge $(p,q)\in\mathcal{PE}$.
\end{definition}

We note that it is generally possible for more than one edge correspond to a given value $Q_{nm}$, and in this situation the rankings $\{X(n,m)\}$ of edges $\mathcal{E}$ and $\{Y(n,m)\} $ of potential edges $\mathcal{PE}$ can lead to ties. That is, multiple edges will have an identical rank, and the next-ranked edge will have a rank that takes into account the number of edges that are tied. For some applications (e.g., the algorithms we develop in the following section), it can be necessary that there are no ties, and in this case we break the tie by randomly assigning an appropriate rank to the edges that correspond to an identical $Q_{nm}$ value.

\subsection{Optimizing Phase Synchrony with Edge Modifications}\label{sec:Maximize}

We will use the rankings $\{X(n,m)\}$ and $\mathcal{E}$ and $\{Y(n,m)\} $ to efficiently solve the following optimization problem. 

\begin{definition}[Maximal Phase Synchrony with Edge Modifications]
Let $R(\bm{\omega},L)$ denote the variance order parameter given by Eq.~\eqref{eq:rObjective} of the phase locked solution of Eq.~\eqref{eq:HLD} for natural frequencies $\bm\omega$ and  {network} Laplacian  $L$. Through the removal of $T^{(-)}$ edges and the addition of $T^{(+)}$ new edges, we wish to solve
\begin{align}
\max_{\Delta L \in \mathcal{D}^{(T^{(-)},T^{(+)})} } R(\bm{\omega},L+\Delta L), 
\label{def:main_optimization_problem}
\end{align}
where 
\begin{align}
\mathcal{D}^{(T^{(-)},T^{(+)})} = \left\{\Delta L : \Delta L  = \sum_{(p,q)\in \mathcal{E}^{(+)}}  \Delta L^{(pq)} -  \sum_{(p,q)\in \mathcal{E}^{(-)}} \Delta L^{(pq)}  \right\} 
\label{def:perturbation_ensemble}
\end{align}
is the ensemble of appropriate perturbations to the Laplacian $L$ that can be obtained by removing $T^{(-)}$ edges, $\mathcal{E}^{(-)}\subseteq \mathcal{E}$, and adding $T^{(+)}$ new edges, $\mathcal{E}^{(+)}\subseteq  {\mathcal{PE}}$, and 
\begin{align}
\Delta L_{ij}^{(pq)} = \left\{ \begin{array}{rl} 
1,&  {(i,j)\in\{(p,p),(q,q)\}}\\ 
-1,&  {(i,j)\in\{(p,q),(q,p)\}}\\ 
0,&\text{otherwise}.
\end{array}\right. 
\label{eq:DL}
\end{align}
gives the change in $L$ due to the addition of an edge $(p,q)$.
\end{definition}

Because $R$ can be solved in terms of the SAF for HLD system [see Eq.~\eqref{eq:rObjective}], Eq.~\eqref{def:main_optimization_problem} is equivalent to
\begin{align}
\min_{\Delta L \in  {\mathcal{D}^{(T^{(-)},T^{(+)})}}} J(\bm{{\omega}},L+\Delta L) .
\label{def:SAF_optimization_problem}
\end{align}
Both Eq.~\eqref{def:main_optimization_problem} and Eq.~\eqref{def:SAF_optimization_problem} can be solved with an exhaustive search if $N$, $T^{(-)}$ and $T^{(+)}$ are very small. However, this approach is infeasible for practical situations in which the network is large or  {more than a few} edges are modified, and one must instead search for approximate solutions that  {can be computed efficiently.} 

\begin{figure}[t]
\centering
\includegraphics[width=1\linewidth]{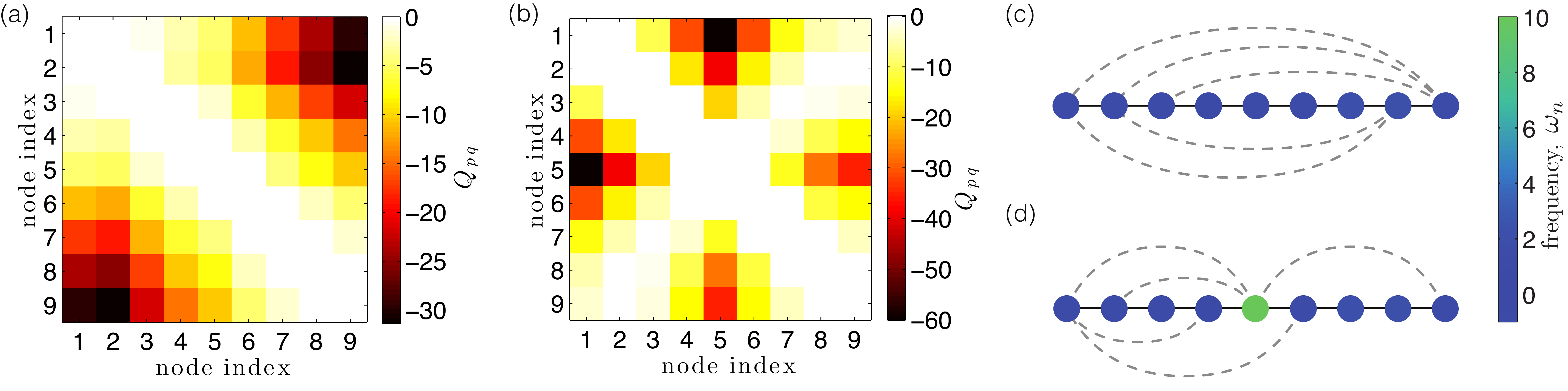}
\caption{
Perturbation $Q_{pq}$ given by Eq.~\eqref{eq:pert_1edge} with $\epsilon=1$ for potential edges $(p,q)\in\mathcal{PE}$ for the chain network with $N=9$ nodes and two choices for $\bm{\omega}$: (a) $\{\omega_n\}$ are independently drawn from a normal distribution with unit variance, and (b) $\{\omega_n\}$ are the same as those in (a) except we create an outlier oscillator by setting $\omega_5=10$. We indicate by dashed curves in panels (c) and (d), respectively, the five top-ranked potential edges, $Y(p,q)\in\{1,\dots,5\}$ given by Eq.~\eqref{eq:rank2}, for the $Q_{pq}$ values shown in panels (a) and (b). Node color indicates $\omega_n$.
}
\label{fig:toy_2}
\end{figure}

\subsection{SAF-Based Edge Ranking for Chain Network}\label{sec:toy_2}

Before continuing, we present a numerical experiment to highlight that the rankings introduced in Sec.~\ref{sec:ranking} take into account both the network structure and oscillator dynamics (i.e., their natural frequencies $\{\omega_n\}$). That is, depending on the particular system it is possible for the rankings to be dominated by either the network structure or natural frequencies. We illustrate this phenomenon by studying the ranking of potential new edges for the chain network that was described in Sec.~\ref{sec:toy_examples} as a pedagogical network for the SAF. In this study, we computed $Q_{pq}$ for all possible edge additions $(p,q)\in\mathcal{PE}$ for two choices of natural frequencies: (a) $\{\omega_n\}$ are drawn independently from a normal distribution with unit variance, and (b) $\{\omega_n\}$ are identical to those in (a) except we define $\omega_5=10$ for oscillator $n=5$. The motivation for setting $\omega_5=10$ is that this oscillator becomes an \emph{outlier} in that its natural frequency is much larger than any other oscillator (i.e., its magnitude is 10 times larger than the standard deviation of the other oscillators). 

In Fig.~\ref{fig:toy_2}(a) and \ref{fig:toy_2}(a), we depict the values $\{Q_{pq}\}$ for these two choices for $\bm{\omega}$. In panels (c) and (d), respectively, we indicate by dashed curves the edges that correspond to the five top-ranked potential edges, $Y(p,q)\in\{1,\dots,5\}$ given by Eq.~\eqref{eq:rank2}, for the $Q_{pq}$ values shown in panels (a) and (b). Note in panel (c) that the top-ranked potential edges connect together the ends of chain, which significantly changes the topology of the network and can be measured, for example, via the network diameter (which decreases from 8 to 4). In contrast, in the presence of the outlier oscillator, node $n=5$, the top-rank edges connect to the outlier or its neighbors to mitigate its disruptive effect on synchronization. In the following section, we present formal algorithms that use the rankings of edges and potential edges to solve the optimization problem defined in Sec.~\ref{sec:Maximize}.

\section{ {Gradient-Descent Algorithms for Synchrony Optimization}}\label{sec:algorithm_sec}
In \cite{skardal2014optimal}, we developed accept/reject (i.e., Monte Carlo) rewiring algorithms to approximately minimize the SAF---thereby maximizing phase synchronization. That is, we developed a process in which we iteratively  {propose} an edge rewire (which we selected uniformly at random),  {compute} the new SAF after the rewire, and then  {accept or reject} the proposed rewiring based on whether or not the SAF  {decreases}. Although we showed that this approach is effective for optimizing the synchronization properties of  {several types of} networks, it is important to develop more efficient algorithms to address practical applications. We now leverage the results of  {Secs.~\ref{sec:Perturb} and \ref{sec:OptimalSection}} to develop gradient-descent algorithms that efficiently identify network modifications that optimally enhance phase synchronization.

This section is organized as follows: 
In Sec.~\ref{sec:Algorithms}, we develop gradient-descent algorithms based on the rankings to efficiently solve these optimization problems. 
In Sec.~\ref{sec:OptimizeNumerics}, we support these results with numerical experiments.
In Sec.~\ref{sec:unreachable}, we provide an extended study of synchrony optimization under non-ideal scenarios.

\subsection{Gradient-Descent Algorithms}\label{sec:Algorithms}

We now describe two algorithms that can be used to approximately solve the class of optimization problem  {defined} in Sec.~\ref{sec:Maximize}. The first algorithm is formally presented in Algorithm~\ref{alg:A}, which we now describe. It consists of two steps. First, we remove the $T^{(-)}$ edges that have lowest rank, $\mathcal{E}^{(-)} = \{(n,m)\in\mathcal{E}:X(n,m)\ge |\mathcal{E}|- T^{(-)}\}$. Next, we add the $T^{(+)}$ potential edges that have highest rank, $\mathcal{E}^{(+)} = \{(n,m)\in\mathcal{PE}:Y(n,m)\le T^{(+)}\}$. To implement this algorithm, we assume there are no tied rankings so that $X(n,m)\not=X(p,q)$ and $Y(n,m)\not=Y(p,q)$ whenever $(n,m)\not=(p,q)$.

We note that Algorithm~\ref{alg:A} is a 1-step gradient descent algorithm since the gradient of the SAF (i.e., its first-order approximation) due to the subgraph rewiring is given by Eq.~\eqref{eq:dJ_2edges}. In particular, the selections of edges $\mathcal{E}^{(+)}$ and $\mathcal{E}^{(-)}$ according to Algorithm~\ref{alg:A} correspond to the direction of the largest gradient. Also, due to Eq.~\eqref{eq:rObjective}, the gradient of the SAF equals the negative gradient of $R$ for the phase-locked state of the system given by Eq.~\eqref{eq:HLD}.
However, we also note that Eq.~\eqref{eq:dJ_2edges} is an approximation to the actual change $\Delta J$ that will occur to the SAF, and therefore Algorithm~\ref{alg:A} only approximately solves the class of optimization problem given by Eq.~\eqref{def:main_optimization_problem}. In fact, the solution error grows with the error of the first-order approximation (see Remark \ref{remark:rem1}). Importantly, the accuracy of Eq.~\eqref{eq:dJ_2edges} decreases with increasing number of edge manipulations, $|\mathcal{E}^{(-)}|+|\mathcal{E}^{(+)}|$, and therefore we expect the performance of Algorithm~\ref{alg:A} to  {become worse} as this number increases. To obtain better approximate solutions to the optimization problem given by Eq.~\eqref{def:main_optimization_problem} with large $T^{(-)}$ or  $T^{(+)}$, we now introduce a second algorithm.

\begin{algorithm}[t]
\caption{{\bf Rank-Based Modifications without Updating }}
\begin{algorithmic}[1]\label{alg:A}
\REQUIRE Network with edges $\mathcal{E}$, potential edges $\mathcal{PE}$, natural frequency vector $\bm{\omega}$, and numbers of edge additions, $T^{(+)}$, and removals, $T^{(-)}$
\ENSURE Set of edges to be added, $\mathcal{E}^{(+)}$, and removed, $\mathcal{E}^{(-)}$ 
\STATE Rank edges $ \mathcal{E}$ and potential edges $ \mathcal{PE}$ according to Eqs.~\eqref{eq:rank1} and \eqref{eq:rank2}
\STATE Define $\mathcal{E}^{(+)}$ as the top-ranked edges, $\mathcal{E}^{(+)} = \{(p,q) : X_{pq} \ge |\mathcal{E}|-T^{(+)} \}$
\STATE Define $\mathcal{E}^{(-)}$ as the lowest-ranked edges, $\mathcal{E}^{(-)} = \{(p,q) : Y_{pq} \le T^{(-)} \}$
\end{algorithmic}
\end{algorithm}

\begin{algorithm}[t]
\caption{{\bf Rank-Based Modifications with Updating}}
\label{alg:B}
\begin{algorithmic}[1]
\REQUIRE Network with edges $\mathcal{E}$, potential edges $\mathcal{PE}$, natural frequency vector $\bm{\omega}$, and numbers of edge additions, $T^{(+)}$, and removals, $T^{(-)}$
\ENSURE Set of edges to be added, $\mathcal{E}^{(+)}$, and removed, $\mathcal{E}^{(-)}$ 
\STATE Initialize sets of edges, $\hat{\mathcal{E}}=\mathcal{E}$, and potential edges, $\hat{\mathcal{PE}}=\mathcal{PE}$
\STATE Initialize the sets of edges to be added, ${\mathcal{E}}^{(+)}=\emptyset$, and removed, ${\mathcal{E}}^{(-)}=\emptyset$
\FOR{ $t\in\{1,\dots,\max(T^{(-)},T^{(+)})\}$} 
\IF{$t\le T^{(-)}$}
\STATE Identify lowest-ranked edge $(p^*,q^*)\in\hat{\mathcal{E}}$ such that $X_{pq} = |\hat{\mathcal{E}}|$
\STATE Add lowest-ranked edge to removal set, $\mathcal{E}^{(-)} = \mathcal{E}^{(-)} \cup \{(p^*,q^*) \}$
\STATE Update the set of edges $\hat{\mathcal{E}} = \hat{\mathcal{E}} \setminus \{(p^*,q^*)\}$
\ENDIF
\IF{$t\le T^{(+)}$}
\STATE Identify top-ranked potential edge $(p^*,q^*)\in\hat{\mathcal{PE}}$ such that $Y_{pq} = 1$
\STATE Add top-ranked potential edge to addition set, $\mathcal{E}^{(+)} = \mathcal{E}^{(+)} \cup \{(p^*,q^*) \}$
\STATE Update the set of potential edges $\hat{\mathcal{PE}} = \hat{\mathcal{PE}} \setminus \{(p^*,q^*)\}$
\ENDIF
\ENDFOR
\end{algorithmic}
\end{algorithm}

We present in Algorithm~\ref{alg:B} another algorithm that utilities the rankings of edges and potential edges according to the SAF. The main difference from Algorithm~\ref{alg:A} is that in Algorithm~\ref{alg:B}, the edge modifications are made  {\it sequentially} rather than simultaneously.  {That is, after each edge modification, the eigenvalues and eigenvectors of the resulting network Laplacian matrix are computed.} In this way, it is a multi-step gradient-descent algorithm. In particular, we first remove the lowest-ranked edge and add the top-ranked potential edge. Then we compute the new rankings after the edge rewire. Next, according to these new rankings, we again remove the lowest-ranked edge, add the top-ranked potential edge, and compute the new rankings. We repeat this process until $T^{(-)}$ edges are removed and $T^{(+)}$ edges are added.

The main benefit of Algorithm~\ref{alg:B} is that the error of the first-order approximation for subgraph rewiring [see Eq.~\eqref{eq:dJ_2edges}] remains small by keeping the perturbations small (i.e., only one rewire is made at a time). We note that it is also possible to update the rankings between the step of edge removal and edge addition to make the perturbations even smaller, but we do not explore this option. We find that Algorithm~\ref{alg:B} yields improved approximate solutions for the optimization problem given by Eq.\eqref{def:main_optimization_problem}; however, it does so at an increased computational cost. In particular, whereas the matrix $\{Q_{pq}\}$ is calculated only once for Algorithm~\ref{alg:A}, it must be recalculated after each of the rewires for Algorithm~\ref{alg:B}. For some applications, we expect that will be beneficial to modify Algorithm~\ref{alg:B} so that the matrix $\{Q_{pq}\}$ is updated after a few (and not every) rewire, and we leave this direction open for future work. Moreover, Algorithm~\ref{alg:B} implements a 1-to-1 modification strategy in which we remove an edge, add an edge, and repeat; however, one could also explore different strategies for the ordering in which edges are removed and added (e.g., one could first removal all edges $\mathcal{E}^{(-)}$ and then add the new edges $\mathcal{E}^{(+)}$, or vice versa). Therefore, although we focus on two algorithms, we stress that the results presented in Secs.~\ref{sec:SAF} and \ref{sec:Perturb} provide a mathematical foundation that can serve as a starting point for developing even further optimization algorithms for phase synchronization in oscillator networks.

\begin{figure*}[t!]
\centering
\includegraphics[width=1\linewidth]{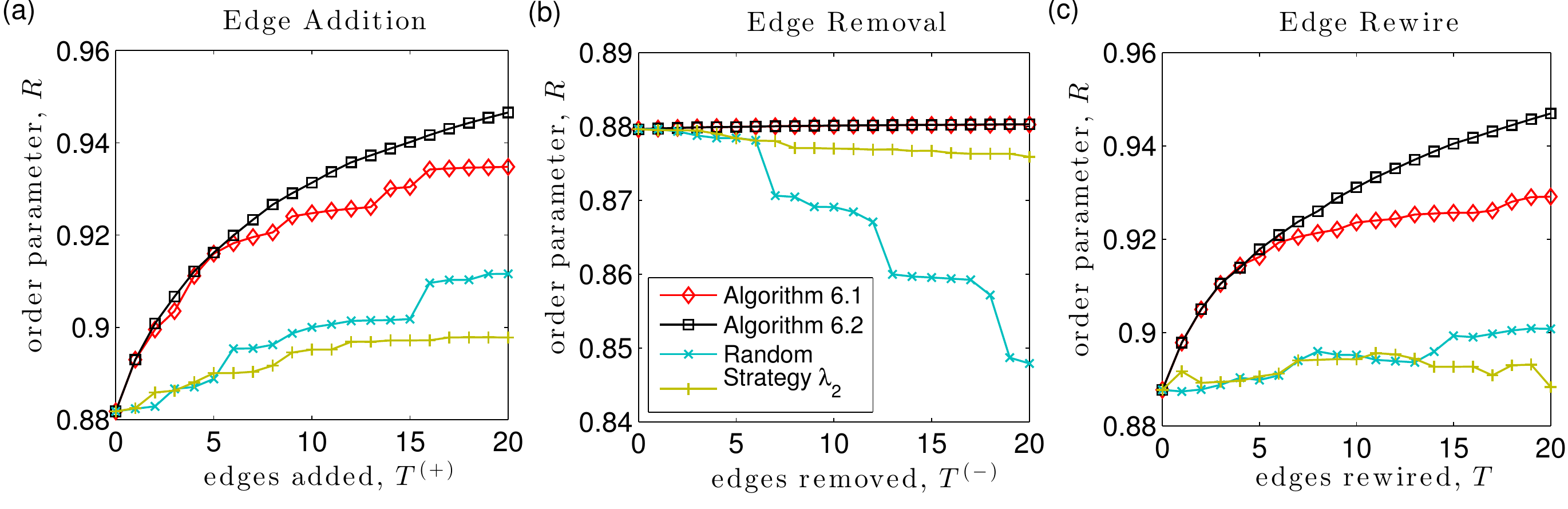}
\caption{
Maximizing phase synchronization with optimal edge modifications. In panels (a), (b), and (c), we illustrate the effectiveness of Algorithms~\ref{alg:A} and \ref{alg:B} for the class of optimization problem defined in Eq.~\eqref{def:main_optimization_problem}. In particular, we study (a) edge addition by setting $T^{(-)}=0$ and allowing $T^{(+)}$ to vary, (b) edge removal by setting $T^{(+)}=0$ and allowing $T^{(-)}$ to vary, (c) edge rewiring by setting $T^{(-)}=T^{(+)}=T$ and allowing $T$ to vary. We compare Algorithms~\ref{alg:A} and \ref{alg:B} to two other edge modification algorithms: Strategy ``Random'' corresponds to when the edges are added or removed uniformly at random, and ``Strategy $\lambda_2$'' corresponds to when the edges are added or removed so as to maximize eigenvalue $\lambda_2$, which is the network's algebraic connectivity \cite{fiedler1973algebraic}. In all panels, the initial network is scale-free with $N=50$ nodes, exponent $\gamma=2.5$, and minimum degree $d_\textrm{min}=10$. The coupling strength is $K=0.02$  {and the values of $R$ are given by Eq.~\eqref{eq:rObjective}.}
}
\label{fig:compare}
\end{figure*}

\subsection{Numerical Experiment: Enhancing Phase Synchronization with Edge Modifications}
\label{sec:OptimizeNumerics}

We now support Algorithms~\ref{alg:A} and \ref{alg:B} with numerical experiments. We constructed an initial system given by Eq.~\eqref{eq:HLD} with natural frequencies $\{\omega_n\}$ drawn from a normal distribution, and we randomly assigned them to nodes in a scale-free network with $N=50$ nodes, exponent $\gamma=2.5$, and minimum degree $d_{{min}}=10$, which we constructed using the configuration model \cite{bekessy1972asymptotic}.
We conducted three experiments for the class of optimization problem defined by Eq.~\eqref{def:main_optimization_problem}: 
\begin{itemize}
\item[(a)] We studied the effect of edge additions and no edge removals by setting $T^{(-)}=0$ and considering various $T^{(+)}$.
\item[(b)] We studied the effect of edge removals and no edge additions by setting $T^{(+)}=0$ and considering various $T^{(-)}$.
\item[(c)] We studied the effect of rewiring $T$ edges by setting $T^{(-)}=T^{(+)}=T$ and considering various $T$.
\end{itemize}

In Fig.~\ref{fig:compare}(a), (b) and (c), we plot the linear order parameter $R$  {given by Eq.~\eqref{eq:rObjective}} versus $T^{(+)}$, $T^{(-)}$ and $T$ for the solutions that were obtained by Algorithms~\ref{alg:A} and \ref{alg:B} for these respective optimization problems. Note that Algorithm~\ref{alg:B} provides better solutions than Algorithm~\ref{alg:A}; however, Algorithm~\ref{alg:A} performs nearly as good when the number of modifications is small. Interestingly, we find that depending on the edge choice, both edge addition and removal can possibly increase or decrease $R$. By comparing panel (b) to (a), however, one can observe  for this experiment that edge addition is much more effective than edge removal for  {the increase of} $R$. Therefore, the enhanced synchronization that can be observed in panel (c) is mostly due to the edges that were added  {rather than the edges that were removed.}

To gauge the effectiveness of Algorithms~\ref{alg:A} and \ref{alg:B} for enhancing phase synchronization, we compare them to two other strategies for modifying a network. First, we define the ``Random'' strategy to indicate the situation in which the appropriate number of edges are removed and/or added uniformly at random. Second, we define ``Strategy $\lambda_2$'' to indicate the selection of edges so as to maximize the eigenvalue $\lambda_2$  {per step}, which is often referred to as the network's algebraic connectivity \cite{fiedler1973algebraic}.  The motivation for comparing to this approach is that $\lambda_2$ is often tuned to control the synchronization of network-coupled dynamical systems with identical oscillators \cite{barahona2002synchronization,simonsen2008transient,lozano2012role,motter2013spontaneous,nishikawa2015comparative}.
To efficiently implement Strategy $\lambda_2$, we use the first-order approximation for the perturbation of $\lambda_2$ due to a network modification as given by Eq.~\eqref{eq:First2} with $n=2$ and $\Delta L=\Delta L^{(pq)}$ given by Eq.~\eqref{eq:DL}.
Note that Algorithms~\ref{alg:A} and \ref{alg:B} both significantly outperform these baseline strategies, which do not take into account the heterogeneous dynamics (i.e., natural frequencies $\{\omega_n\}$) of the network-coupled dynamical system.

\begin{figure*}[t!]
\centering
\includegraphics[width=1\linewidth]{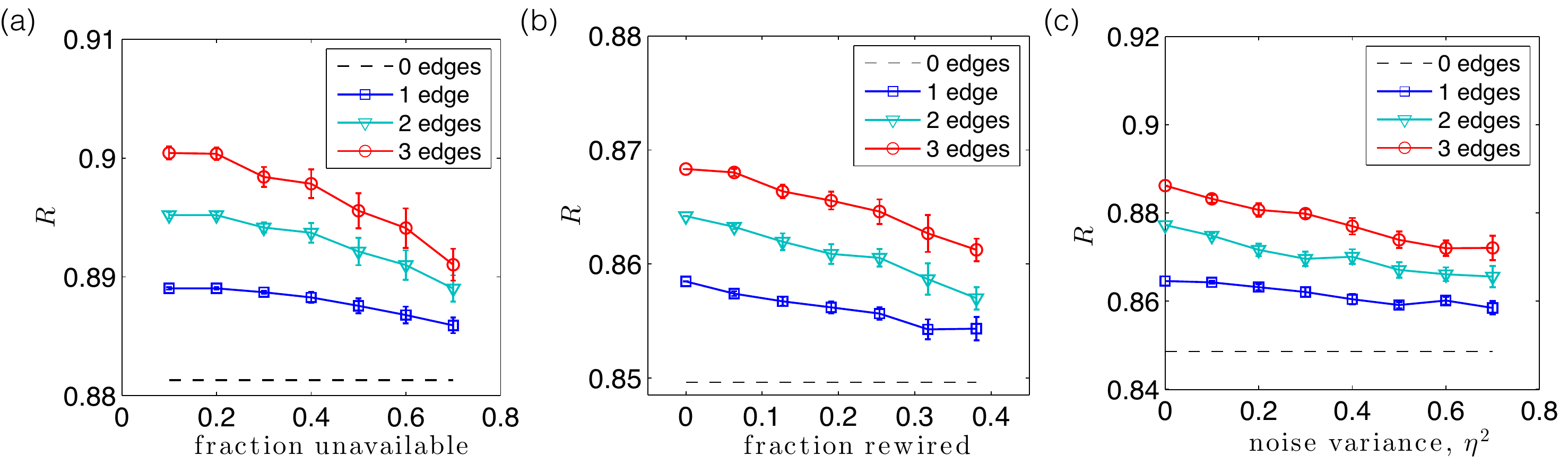}
\caption{ 
Performance of Algorithm~\ref{alg:B} for non-ideal scenarios of synchrony optimization. 
(a)~Dependence of $R$ for a constrained optimization problem in which the edges adjacent to some fraction of the nodes are unavailable and cannot be modified.
(b)~Dependence of $R$ when there is misinformation about the network due to a fraction of the edges being rewired.
(c)~Dependence of $R$ when the natural frequencies have been subjected to Gaussian noise with variance $\eta^2$.
In all panels, curves and error bars indicate the mean and standard error across 10 simulations. 
}
\label{fig:extended}
\end{figure*}

\subsection{Numerical Experiment: Optimization in Non-Ideal Scenarios}\label{sec:unreachable}
Before concluding, we present an extended investigation in which we study the performance of Algorithm~\ref{alg:B} in the following non-ideal situations: 
\begin{itemize}
\item[(a)] when a fraction of the nodes are unavailable in that their edges cannot be perturbed.
\item[(b)] when there is misinformation about the edges that are present in the network;
\item[(c)] when there is misinformation about the natural frequencies.
\end{itemize}
We present results for these respective experiments in Figs.~\ref{fig:extended}(a), (b) and (c). Unless otherwise specified, the natural frequencies are drawn from a normal distribution with unit variance, $K=0.02$, and the network contains $N=50$ nodes and is constructed by the configuration model \cite{bekessy1972asymptotic} with node degrees generated according to a power-law distribution with $\gamma=2.5$, $d_{min}=10$.

In the first study, we investigated a constrained optimization problem in which new edges can only be added to a subset of the nodes---that is, a fraction of the nodes are unavailable for modification. In particular, we select uniformly at random a set of nodes and remove all edges adjacency to them from the set of potential edges $\mathcal{PE}$. We then modify the optimization problem in Sec.~\ref{sec:Maximize} and algorithms of Sec.~\ref{sec:Algorithms} based on this reduced set of potential edges.
In Fig.~\ref{fig:extended}(a), we plot the dependence of $R$ given by Eq.~\eqref{eq:rObjective} after adding edges according to Algorithm~\ref{alg:B} as a function of the fraction of nodes that are unavailable for modification. Note that phase synchronization can be effectively optimized even when a significant fraction of nodes are unavailable to receive new edges.

In the second study, we investigated the effect of misinformation about the network on the performance of synchrony  optimization. That is, rather than implementing Algorithm~\ref{alg:B} using the true network, we used a misinformed network in which a fraction of the edges have been rewired so that there is some discrepancy between the actual network Laplacian $L$ and the one used in the gradient descent algorithm. To construct a misinformed network, we implemented an edge rewiring process in which we iteratively removed an edge and created a new edge uniformly at random from the potential edges. In Fig.~\ref{fig:extended}(b), we plot the dependence of $R$ given by Eq.~\eqref{eq:rObjective} after adding edges according to Algorithm~\ref{alg:B} as a function of the fraction of edges that are rewired. Note that because matrix spectra are relatively robust to perturbations when the eigenvalues are well-spaced \cite{davis1970rotation}, we observe that phase synchrony can still be significantly enhanced even with considerable misinformation about the network structure. 

Finally, in the third study we investigated the effect on algorithm performance when there is misinformation about the natural frequencies. That is, rather than implement Algorithm~\ref{alg:B} using the true natural frequencies, we added to the frequencies $\{\omega_n\}$ Gaussian noise with variance $\eta^2$. In Fig.~\ref{fig:extended}(c), we plot the effect on $R$ for edge additions via Algorithm~\ref{alg:B} as a function of $\eta^2$. Note that the algorithm performs well provided that $\eta^2$ is smaller than the variance of the natural frequencies, which are normally distributed with unit variance, $\sigma_\omega^2=1$.

\section{Discussion}\label{sec:Discussion}

Complex systems exhibiting synchronization are widespread, and for many systems---ranging from the biological rhythms \cite{winfree1967biological} that govern activity in our brains, hearts and other vital organs, to macroscopic systems such  {as power grids}---it is essential that a precise amount of synchronization be present in order to retain proper functionality. For example, a lack of synchronization is a well-known to drive black-outs in power grids \cite{motter2013spontaneous,nishikawa2015comparative,skardal2015control,simonsen2008transient,barahona2002synchronization}, and many neurological tremors are linked to excessive synchronization between neurons \cite{schnitzler2005normal,wilson2015clustered}. 

Here, we explored how to tune and control phase synchronization for network-coupled dynamical systems using network modifications such as adding and/or removing edges. Our analysis is based on recent research \cite{skardal2014optimal} in which we developed a synchrony alignment function (SAF) to measure the interplay between oscillators' heterogeneous natural frequencies and the structural heterogeneity of the network. The SAF is an objective measure for the ability for synchronization to occur for a system with heterogeneous dynamics (e.g., nonidentical natural frequencies $\{\omega_n\}$). Its optimization offers a mathematical framework to design synchrony-optimized systems. Importantly, this approach take into account the actual heterogeneity of the node's dynamics and is complementary to previous research that either lacks or neglects this type of heterogeneity \cite{jalili2008enhancing,jalili2013enhancing,nishikawa2010network}. 

In this research\footnote{ {Note that we have made available several Matlab scripts and a demo to accompany this research at \url{https://github.com/taylordr/SAF_optimization}.}}, we provided the SAF with a more rigorous footing and conducted a spectral perturbation analysis. We derived a first-order expansion that allowed us to approximate how the SAF is effected by network modifications, and this approach is much more computationally efficient than directly recomputing the SAF for the modified system. Specifically, when only a few edges are modified then the approximation is $\mathcal{O}(N^2)$ versus $\mathcal{O}(N^3)$ for recomputing the SAF, where $N$ is the number of oscillators.
By focusing on the addition and removal of edges, we obtained a ranking for the edges (and potential edges) that orders them according to their importance to the SAF and therefore, phase synchronization.  {Importantly, these rankings take into account both the network structure and the heterogeneous oscillator dynamics.} Relying on these rankings, we developed gradient-descent algorithms to efficiently minimize the SAF, which simultaneously maximizes a linear order parameter $R$ that approximates the Kuramoto order parameter $r$. These results complement previous work \cite{skardal2014optimal} where we designed synchrony optimized networks using accept/reject (i.e., Monte Carlo) algorithms. Importantly, here we study a different optimization problem: maximizing phase synchronization using a specified number of edge additions and removals. We showed with numerical experiments (see Fig.~\ref{fig:compare}) that these algorithms significantly outperform other baseline strategies, such as random rewiring or tuning the algebraic connectivity $\lambda_2$, which are \emph{naive} in that they neglect the heterogeneity of oscillator dynamics (i.e., the natural frequencies $\{\omega_n\}$).

The  {theory that we developed here allows us to decide, quantitatively, the extent to which a particular set of connections promote or inhibit phase synchronization and}
can be used to control, engineer and optimize the synchronization properties of complex systems. Our work also provides a mathematical framework with which further optimization techniques can be developed and applied to oscillator networks. It would be interesting to combine the synchrony alignment framework with more advanced optimization techniques  {such as simulated annealing \cite{kirkpatrick1983optimization} and convex optimization \cite{boyd2004convex}. In particular, (by design) gradient-descent algorithms find local optima, not global optima. As previously explored for the optimization of identical oscillators \cite{jalili2013enhancing}, this shortcoming can likely be overcome using, for example, simulated annealing.
It would also be interesting to explore the utility of the SAF for optimizing other aspects of synchronization such as the critical coupling strength at which the phase-locked state appears/disappears, which relates to the quantity $\max_{(i,j)\in\mathcal{E}} |\theta^*_i-\theta^*_j|$ \cite{dorfler2012synchronization}. Synchrony optimization via the SAF minimizes the variance of steady-state phases, and we are currently exploring its utility for tuning the maximum difference.
Finally, it is worth pointing out the rich set of open problems that remain to be tackled, including the dependence of the SAF on various network properties such as the scaling with $N$ and mean degree, degree correlations, clustering, community structure, and so on.}

\appendix

\section{Proof to Proposition \ref{lemma1}}\label{appendixA}

\begin{proof}
We begin with the upper bound.
We will first obtain a relation between ${\|\bm{\theta}-\psi \bm 1\|_2^2}$ and ${\|\bm{\theta}-\overline{\theta} \bm 1\|_2^2}$. We find
\begin{align}
{\|\bm{\theta}-\psi \bm 1\|_2^2} &= || \bm{\theta}-\overline{\theta}\bm 1 + (\overline{\theta}-\psi) \bm 1 ||_2^2 \nonumber\\
&= \langle \bm{\theta}-\overline{\theta}\bm 1 + (\overline{\theta}-\psi) \bm 1,\bm{\theta}-\overline{\theta}\bm 1 + (\overline{\theta}-\psi) \bm 1  \rangle \nonumber\\
&=  {{\|\bm{\theta}-\overline{\theta}\bm{1}\|_2^2}}  + 2\langle \bm{\theta}-\overline{\theta}\bm 1,(\overline{\theta}-\psi) \bm 1 \rangle+ \| (\overline{\theta}-\psi) \bm 1\|_2^2  \nonumber\\
&= N\sigma_\theta^2  +  N |\overline{\theta}-\psi|^2    .\label{eq:bbound0}
\end{align}
Here, the last line uses that the second term vanishes since $\langle \bm \theta - \overline{\theta}\bm 1,\bm1 \rangle=0$. It follows that 
\begin{align}
{\|\bm{\theta}-\psi \bm 1\|_2^2} & \ge {\|\bm{\theta}-\overline{\theta} \bm 1\|_2^2} = N\sigma_\theta^2 . \label{eq:bbound1}
\end{align}
Next, we note that the Kuramoto order parameter is equivalent to the system of equations,
\begin{align}
0&=N^{-1}\sum_{n=1}^N \sin(\theta_n-\psi) \nonumber\\
r& = N^{-1}\sum_{n=1}^N \cos(\theta_n-\psi). \label{eq:OrderParameter2}
\end{align}
We Taylor expand the cosine functions in Eq.~\eqref{eq:OrderParameter2}  {around $0$}, isolate the first two terms, and use Eq.~\eqref{eq:bbound0} to obtain 
\begin{align}
r&= 1- \frac{||\bm{\theta}-\psi \bm{1} ||_2^2}{2N}  +\sum_{k=2}^\infty \frac{(-1)^k ||\bm \theta - \psi \bm 1||^{2k}_{2k}}{(2k)!N}\nonumber\\
&=1-\frac{\|\bm{\theta}-\overline{\theta} \bm 1 \|_2^2 + N|\overline{\theta}-\psi|^2 }{2N}+\sum_{k=2}^\infty \frac{(-1)^k ||\bm \theta - \psi \bm 1||^{2k}_{2k}}{(2k)!N}\nonumber\\
&=R - \frac{ |\overline{\theta}-\psi|^2}{2}+\sum_{k=2}^\infty \frac{(-1)^k ||\bm \theta - \psi \bm 1||^{2k}_{2k}}{(2k)!N}.
\end{align}
Given that the terms in the summation oscillate in sign, our assumption of monotone convergence implies that the summation is  {upper} bounded by the first term, $||\bm \theta-\psi \bm 1||_4^4/(4!N)$. Combining this bound with Eq.~\eqref{eq:bbound1} recovers the upper bound in Eq.~\eqref{eq:equivalence}.  
We next prove the lower bound. Monotone convergence also implies that the summation is positive, which gives the lower bound
\begin{align}
r&\ge R - \frac{ |\overline{\theta}-\psi|^2}{2}. \label{eq:bound3}
\end{align}
To bound the difference between the mean fields, $\psi$ and $\overline{\theta}$, we Taylor expand the sine functions in Eq.~\eqref{eq:OrderParameter2}, isolate the first term, and rearrange to obtain
\begin{align}
\overline{\theta}-\psi  &=   \sum_{k=1}^\infty   \frac{(-1)^{k+1}}{(2k+1)!N} \sum_{n=1}^N(\theta_n-\psi)^{2k+1} .
\end{align}
Note that terms in the summation oscillate in sign so that terms $k=1,3,\dots$ have the same sign as $ \overline{\theta}-\psi $. Under our assumption of monotone convergence, the magnitude of the summation is bounded by the magnitude of the first term. We neglect the remaining terms and take the absolute value of both sides to obtain Eq.~\eqref{eq:meanfields}.
\end{proof}

\section{Proof to Theorem \ref{eq:stationarySolution}}\label{appendixB}
\begin{proof}
In the state of phase-locked synchronization, $d \theta_n /dt = \Omega$ for every oscillator so that Eq.~\eqref{eq:HLD} becomes 
\begin{align}\label{eq:Omega1}
\Omega \bm{1}=\bm{\omega}- K{L}\bm{\theta}^*.
\end{align} 
The Moore-Penrose inverse  {$L^{\dagger}=\sum_{n=2}^{N}\lambda_n^{-1}\bm{v}^{(n)}\bm{v}^{(n)\top}$} is defined so that $L^\dagger L^\dagger L = L $ and $L^\dagger LL^\dagger = L^\dagger $.  {Recall that the eigenvectors $\{\bm{v}^{(n)}\}$ of $L$ define an orthonormal basis, and our assumption of a connected undirected network implies $0=\lambda_1<\lambda_2\dots\leq\lambda_N$.}
We multiply both sides by $K^{-1}L^\dagger$ to obtain a general solution of the form
\begin{align}
\bm{\theta}^* =  K^{-1}{L}^\dagger \bm{ {\omega}}  -  K^{-1} {L}^\dagger (\Omega \bm1) + c\bm v^{(1)}, \label{eq:qwe1}
\end{align}
where $\bm v^{(1)}=N^{-1/2}\bm 1$ is the eigenvector corresponding to the trivial eigenvalue $\lambda_1=0$ and $c\in\mathbb{R}$ is a constant that accounts for the projection of $\bm{\theta}^*$ onto the nullspace,
$\text{null}(L^\dagger)=\text{null}(L)=\text{span}(\bm v^{(1)}).$
Because $\bm1\in\text{null}(L^\dagger)$, $L^\dagger (\Omega\bm 1)=0$ and the second term vanishes.
To solve for $c$, we multiply both sides of Eq.~\eqref{eq:qwe1} by $N^{-1}\bm 1^T$ to obtain $c=N^{1/2}\overline{\theta}$  {(i.e., $c\bm v^{(1)} = \overline{\theta}\bm 1$).}
To complete the proof, we use Eq.~\eqref{eq:stationarySolution} to obtain
\begin{align}
R &= 1-\sigma_\theta^2/2 \nonumber\\
&= 1- \frac{1}{2N}|| \bm{\theta}^*-\overline{\theta} \bm{1}||^2 \nonumber\\
&=1-\frac{1}{2N}|| K^{-1}{L}^\dagger \bm{ {\omega}}   ||^2 \nonumber\\
&=1-J(\bm{ {\omega}},  L)/2K^2 .
\end{align}
\end{proof}

\section{Proof to Theorem \ref{sec:General}}\label{appendixC}
\begin{proof}
We define
\begin{align}
F(\epsilon) = J(\bm{ {\omega}},  L +\epsilon \Delta L) = \frac{1}{N}\sum_{n=2}^N \frac{f_n(\epsilon)} {g_n(\epsilon) },
\end{align}
where
$f_n(\epsilon) = [ \bm{\omega}^T\bm{v}^{(n)} (\epsilon)]^2$ and $g_n(\epsilon) = \lambda_n^{2}(\epsilon),$ 
and we seek a solution of the form 
\begin{align}
F(\epsilon) = F(0) + \epsilon F'(0)+ \mathcal{O}(\epsilon^2).
\end{align} 
Here, we use $F'(\epsilon)$ to denote the derivative with respect to $\epsilon$, $F'(\epsilon)=\frac{dF}{d\epsilon}$. Using the quotient rule, we find  
\begin{align}
F'(\epsilon)
 &=  \frac{1}{N} \sum_{n=2}^N \frac{f'_n(\epsilon)g_n(\epsilon) - f_n(\epsilon)g'_n(\epsilon)}{g_n^2(\epsilon)},
\end{align}
where $f_n'(\epsilon) = 2[ \bm{\omega}^T\bm{v}^{(n)} (\epsilon)][ \bm{\omega}^T{\bm{v}^{(n)}}' (\epsilon)]$ and $g_n'(\epsilon) = 2\lambda_n(\epsilon)\lambda_n'(\epsilon)$. 
Evaluation of this expression at $\epsilon=0$ yields
\begin{align}
F'(0)
 &=  \frac{1}{N} \sum_{n=2}^N \frac{ 2[ \bm{\omega}^T\bm{v}^{(n)}][ \bm{\omega}^T{\bm{v}^{(n)}}' ] \lambda_n^{2}}{\lambda_n^4}
  - \frac{[ \bm{\omega}^T\bm{v}^{(n)} ]^2\left[2\lambda_n \lambda_n' \right]}{\lambda_n^4},
  \label{eq:eval_at_zero}
\end{align}
where we have dropped the argument $\epsilon$ when $\epsilon=0$ to simplify our presentation. Recall that $\lambda'_n=(\bm v^{(n)})^T \Delta L \bm v^{(n)}$ and ${\bm{v}^{(n)}}' =\sum_{{m\not=n}} \frac{ (\bm v^{(m)})^T \Delta L \bm v^{(n)}}{\lambda_n-\lambda_m } \bm v^{(m)}$ are well-known perturbation results given in Eq.~\eqref{eq:First2}. We substitute these results into Eq.~\eqref{eq:eval_at_zero} and combine terms to obtain
\begin{align}
F'(0)
 &=  \frac{2}{N}\sum_{n=2}^N   \left(\frac{ \bm{\omega}^T  \bm v^{(n)} }{   \lambda_n ^3}  \right)
\left(\sum_{{m=1}}^N \frac{ [ \bm{\omega}^T\bm v^{(m)}][(\bm v^{(m)})^T \Delta L \bm v^{(n)}]}{(1-\frac{\lambda_m}{\lambda_n}) -\delta_{nm}}   \right) .
\end{align}

\end{proof}

\section{Proof to Corollary \ref{Corr_11}}\label{appendixD}
 \begin{proof}
We first note that $\epsilon=1$ for the modification of an unweighted edge. 
Given a Laplacian matrix $L$, the new Laplacian matrix after adding or removing an undirected edge $(p,q)$ has the form $L'=L+ \Delta L^{(pq)} $ or $L'=L- \Delta L^{(pq)} $, respectively, where $\Delta L_{ij}^{(pq)} $ is given by Eq.~\eqref{eq:DL} Using Eq.~\eqref{eq:DL}, it is straightforward to show
\begin{align}
(\bm v^{(m)})^T \Delta L^{(pq)} \bm v^{(n)} &= (\bm v^{(m)}_p-\bm v^{(m)}_q)(\bm v^{(n)}_p-\bm v^{(n)}_q).
\end{align}
We substitute this result into Eq.~\eqref{eq:First_J} to complete the proof.
\end{proof}

\section{Proof to Corollary \ref{Corr_12}}\label{appendixE}
\begin{proof}
Due to linearity, it follows that 
\begin{align}
\Delta L=  \sum_{(p,q)\in\mathcal{E}^{(+)}}  \Delta L^{(pq)} -\sum_{(p,q)\in\mathcal{E}^{(-)}}  \Delta L^{(pq)} . 
\end{align}
Thus
\begin{align}
(\bm v^{(m)})^T \Delta L \bm v^{(n)} &= \sum_{(p,q)\in\mathcal{E}^{(+)}}  ( \bm v^{(m)}_p-\bm v^{(m)}_q)(\bm v^{(n)}_p-\bm v^{(n)}_q)  \nonumber\\ &-
\sum_{(p,q)\in\mathcal{E}^{(-)}}  ( \bm v^{(m)}_p-\bm v^{(m)}_q)(\bm v^{(n)}_p-\bm v^{(n)}_q).
\end{align}
We substitute this result into Eq.~\eqref{eq:First_J} and simplify to recover Eq.~\eqref{eq:dJ_2edges}.
\end{proof}

\bibliographystyle{siam}
\bibliography{SAF_perturb}

%
\end{document}